\DeclareMathOperator{\tr}{tr}
\newtheorem{theorem}{\bf Theorem}
\newcommand*\dif{\mathop{}\!\mathrm{d}}
\begin{document}

\title{Millimeter Wave Communications with an Intelligent Reflector: Performance Optimization and Distributional Reinforcement Learning}
\author{ 
	\IEEEauthorblockN{Qianqian Zhang$^1$,  Walid Saad$^1$, and Mehdi Bennis$^2$}
	
	\IEEEauthorblockA{\small
		$^1$Bradley Department of Electrical and Computer Engineering, Virginia Tech, VA, USA, Emails: \url{{qqz93,walids}@vt.edu}. \\
		$^2$Center for Wireless Communications, University of Oulu, Finland, Email: \url{mehdi.bennis@oulu.fi}.  	
	}
}
\maketitle

\vspace{-1.5cm}

\begin{abstract}

In this paper, a novel framework is proposed to optimize the downlink multi-user communication of a millimeter wave base station, which is assisted by a reconfigurable intelligent reflector (IR).  
In particular, a channel estimation approach is developed to measure the channel state information (CSI) in real-time.  
First, for a perfect CSI scenario, the precoding transmission of the BS and the reflection coefficient of the IR are jointly optimized, via an iterative approach,  so as to maximize the sum of downlink rates towards multiple users. 
Next, in the imperfect CSI scenario,  a distributional reinforcement learning (DRL) approach is proposed to learn the optimal IR reflection and maximize the expectation of downlink capacity. 
In order to model the transmission rate's probability distribution, a learning algorithm, based on quantile regression (QR), is developed, and the proposed QR-DRL method is proved to converge to a stable distribution of downlink transmission rate. 
Simulation results show that, in the error-free CSI scenario, the proposed  approach yields over $30\%$ and $2$-fold increase in the downlink sum-rate, compared with a fixed IR reflection scheme and direct transmission scheme, respectively. 
Simulation results also show that by deploying more IR elements,  the downlink sum-rate can be significantly improved. 
However, as the number of IR components increases,  more time is required for channel estimation, and the  slope of increase in the IR-aided transmission rate will become smaller. 
Furthermore, under limited knowledge of CSI, simulation results show that the proposed QR-DRL method, which learns a  full distribution of the downlink rate,  yields a better prediction accuracy and improves the downlink rate by $10\%$ for online deployments, compared with a Q-learning baseline. 
  
\end{abstract}

{\small \emph{Index Terms} -- intelligent reflector;  reinforcement learning; multi-user MISO; millimeter wave; beyond 5G.}  
 
\IEEEpeerreviewmaketitle

\section{Introduction}

Next-generation cellular systems will inevitably rely on high-frequency millimeter wave (mmW) communications to meet the growing need for wireless capacity \cite{saad2019vision}.   
By leveraging the large bandwidth at mmW frequencies, a wireless network  can potentially deliver high-speed wireless links and meet stringent quality-of-service requirements.  
Moreover,  mmW  bands allow the implementation of small-sized antenna arrays and facilitate the use of massive multiple-input-multiple-output (MIMO) techniques to improve the wireless capacity.   
However, the high power cost and sophisticated signal processing of MIMO communications hinder the deployment of mmW frequencies into wide-scale commercial uses.  
Meanwhile, the high susceptibility to blockage caused by common objects, such as foliage and human bodies,  yields the uncertainty of mmW channels  \cite{wang2015multi}. 
Therefore, enabling reliable mmW links under blockage is a prominent challenge.

To overcome these intrinsic drawbacks of mmW, signal reflectors have been recently proposed to bypass obstacles and prolong the communication range \cite{peng2016effective}. 
By using reflectors, a non-line-of-sight (NLOS) mmW link can be compensated by creating multiple, connected line-of-sight (LOS) links \cite{zhang2019reflections}. 
Different from conventional relay stations (RSs) that receive, amplify (or decode), and forward the mmW signal, a reflector only reflects the incident signal towards the receiver, by inducing a phase shift. 
Therefore, the use of reflectors incurs no additional receiving noise or processing delay. 
Due to the nature of reflective surfaces, connected LOS links that pass through one or more reflectors can share the same frequency band, thus improving spectrum efficiency. 
A reconfigurable reflector can intelligently tune the conductivity of its metasurfaces, and, thus, reflect incident signals with different phase shifts. By jointly adjusting the phase shifts of a large number of such low-cost semi-passive elements, an intelligent reflector (IR) can focus the reflected signal into a sharp beam, hence maximizing the beamforming performance gain  with little energy cost.   
Indeed, it has been shown in \cite{tan2018enabling} that the use of reflectors is more appropriate for mmW networks than traditional RSs, in terms of energy, cost, and spectrum efficiency. 
However,  performing optimal reflection on the IR requires precise channel state information (CSI). 
Due to the possible mobility of the served user equipment (UE) and the blockage-prone nature of mmW signals, it is difficult for a practical IR-assisted wireless network to continuously obtain an accurate value for CSI.  
Thus, to enable a  real-time and efficient IR-aided transmission via mmW,  the challenges of CSI estimation and network performance optimization under imperfect CSI must be properly addressed.

\subsection{Related works}

The use of IRs to enhance the performance of cellular networks has attracted significant recent attention in  \cite{peng2016effective}, \cite{zhang2019reflections}, and \cite{ hu2018beyond, huang2019reconfigurable, wu2018intelligent, jamali2019reflect,di2019smart,huang2020holographic,zappone2019wireless, jung2019performance, taha2019enabling, nadeem2019intelligent, yang2019intelligent, mishra2019channel,wei2020parallel,alexandropoulos2020hardware}.   
In our previous work \cite{zhang2019reflections}, we studied the deployment of a UAV-carried IR whose goal is to optimize the downlink mmW transmission towards a mobile outdoor user, using a deep learning. 

In \cite{peng2016effective}, the design of a passive reflector and the estimation of the reflection gain are presented for indoor and urban mmW communications. 
The authors in \cite{hu2018beyond} investigated the potential of a large intelligent surface for positioning, and the work in \cite{huang2019reconfigurable} studied the energy efficiency of reflector-assisted downlink communications. 
The authors in \cite{wu2018intelligent} jointly optimized the transmit beamforming from an access point and the reflective beamforming in IR to maximize the received UE signal power. 
The work in \cite{jamali2019reflect} proposed a hybrid MIMO framework that applies reflective arrays and conventional transmit antennas to improve mmW energy efficiency.  
In \cite{di2019smart,huang2020holographic,zappone2019wireless}, the availability of  IR-aided transmissions and  learning-enabled communications is investigated for  practical network operations. 
However, the prior work on mmW reflectors in \cite{peng2016effective} focuses mainly on experimental measurements, while the IR-related works in \cite{zhang2019reflections} and  \cite{hu2018beyond,huang2019reconfigurable,wu2018intelligent,jamali2019reflect,di2019smart,huang2020holographic,zappone2019wireless} assumed perfect downlink CSI, which is challenging to know a prior in a practical network operation.

To obtain a precise value for CSI, recent works  \cite {jung2019performance, taha2019enabling, nadeem2019intelligent, yang2019intelligent,mishra2019channel,wei2020parallel,alexandropoulos2020hardware} have studied new approaches to efficiently measure CSI for IR-assisted communications.  
The authors in \cite{jung2019performance} and \cite{mishra2019channel} optimized the spectral efficiency and the downlink received power, respectively, for a large intelligent surface system with channel estimation error.  
In \cite{taha2019enabling} and \cite{nadeem2019intelligent}, a number of channel estimation protocols were investigated for reflecting beam training in a large intelligent surface communication system, based on deep learning and minimum mean squared error techniques, respectively.  
In order to reduce the overhead of channel estimation,  authors in \cite{yang2019intelligent}  aggregated adjacent reflective elements and measure the combined CSI for each group of IR components, and  authors in \cite{alexandropoulos2020hardware}  applied an active reception radio frequency chain to assist  channel estimation.
Furthermore, \cite{wei2020parallel} investigated the  problem of cascaded channel estimation by decomposing the channel information for each  wireless link. However, none of these prior works in \cite{jung2019performance, taha2019enabling, nadeem2019intelligent, yang2019intelligent, alexandropoulos2020hardware, wei2020parallel, mishra2019channel} studied the problem of IR-assisted cellular communications over mmW spectrum, which  is more sensitive to the real-time CSI, due to its shorter wavelength  and susceptibility to blockage.

\subsection{Contributions}

The main contribution of this paper is a novel framework for optimizing the IR-aided downlink transmission of a BS over mmW links.  
In particular, we propose a practical approach to estimate the downlink CSI, such that the reflection coefficient of the IR can be optimized in a real-time manner so as to maximize the downlink capacity of multiple UEs. Our main contributions are:

\begin{itemize}
	\item 
	First, under the assumption of perfect CSI, we jointly optimize the precoding transmission of the BS and the reflection coefficient of the IR.  
	An iterative algorithm is designed, based on Lagrangian transformation, fractional programming, and alternating optimization techniques,  to maximize  the downlink sum-rate of the IR-aided communications towards multiple users.	  %is maximizedis guaranteed to converge to the optimal and unique solution with a polynomial computational complexity.
	\item  
	Given that receivers' noise induces errors to the measured CSI, we study the optimization of the IR-aided transmission with imperfect CSI.  
	In particular, we propose a distributional reinforcement learning (DRL) approach to model the distribution function of the downlink rate, and, then, the IR reflection coefficient is optimized to maximize the expected downlink capacity.  
	To model the rate's probability distribution, an iterative learning algorithm based on quantile regression (QR) is developed, so that the optimal reflection coefficient  is learned based on UEs' feedback. 
	We analytically prove that the proposed QR-DRL approach   converges to a stable distribution of IR-aided downlink transmission rates.

	\item 
	For a scenario with error-free CSI, simulation results show that, the proposed transmission approach outperforms two baselines: (i) a fixed IR reflection and (ii) a direct transmission scheme.  
	As the BS transmit power increases from $20$ to $40$ dBm, the proposed method yields over $30\%$  and $3$-fold increases in the average downlink rate, compared with both baselines.    
	Meanwhile, the proposed approach shows over $30\%$ and $2$-fold increases in performance, as the downlink bandwidth increases from $0.1$ to $3$ MHz.   
	As the number of transmit antennas increases from $16$ to $100$, the average sum-rates resulting all algorithms will increase, and the proposed method improves the data rate by over $30\%$, compared with the fixed IR scheme.  
	When the number of IR components increases, the performance of the direct transmission scheme remains the same, while both the IR-assisted methods yield higher rates. The proposed method improves the performance by over $20\%$, compared with the fixed IR scheme.  
    Moreover, simulation results show that as the number of IR components becomes larger,  more time is required for channel estimation, and, thus, the increase speed of the  IR-aided transmission rate will be slower.  
	
	\item
	For a scenario with imperfect CSI, simulation results show that the QR-DRL method, which learns a full distribution of the downlink sum-rate, has a slower convergence rate, but yields a better prediction accuracy and improves the average spectrum efficiency by over $10\%$ for online deployments, 
	compared with a  Q-learning baseline. 
	 
	
\end{itemize}

The rest of this paper is organized as follows. Section \ref{sec_systemModel} presents the system model. 
The transmission and reflection under perfect CSI is optimized in Section \ref{optPerfectCSI}. 
In Section \ref{optLimitedCSI}, the downlink sum-rate is maximized under limited knowledge of CSI, where a learning framework is proposed to tackle the uncertainty of downlink CSI and optimize the reflection downlink capacity.  
Simulation results are presented in Section \ref{sec_simulation} and conclusions are drawn in Section \ref{sec_conclusion}.  
 
\emph{Notation:} 
Italic letters $a$ and $A$ are both scalar, the bold letter $\boldsymbol{a}$ is a vector, the bold capital $\boldsymbol{A}$ is a matrix, and $a_{i,j}$ is the element on the $i$-th row and the $j$-th column of $\boldsymbol{A}$. 
The calligraphic capital $\mathcal{A}$ denotes a set. 
The blackboard bold $\mathbb{R}$, $\mathbb{C}$ and $\mathbb{N}^+$ are the sets of real numbers, complex numbers, and positive integer numbers, respectively. 
$\mathcal{CN}(\boldsymbol{\mu},\boldsymbol{\Sigma})$ denotes a complex normal random variable with mean vector $\boldsymbol{\mu}$ and covariance matrix $\boldsymbol{\Sigma}$.  
$\|\boldsymbol{A} \|_F$ is the Frobenius norm of the matrix $\boldsymbol{A}$, and  $\tr(\boldsymbol{A})$ denotes the trace. 
$\|\boldsymbol{a}\|$ is the two-norm of a vector. 
$|a|$, $\angle a$, and $\text{Re}\{a\}$ are the absolute value, angle and real part of the complex number in polar coordinates.    
$(\cdot)^T$, $(\cdot)^{H}$,  $(\cdot)^{-1}$, and  $(\cdot)^{+}$ are transport, Hermitian (conjugate transport), inverse, and pseudo-inverse, respectively.  
$\inf$ and $\sup$ are the infimum and supremum of a set. 
$\text{diag}(\boldsymbol{a})$ is a diagonal matrix with the entries of $\boldsymbol{a}$ on its diagonal.  
$\mathbbm{1}_{a=0}$ denotes an indicator function that equals to $1$  if ``$a=0$" is true,  $\boldsymbol{1}_n$ is an $n \times 1$ all-ones vector, and $\boldsymbol{I}_n$ is the $n \times n$ identify matrix.

\section{System Model}\label{sec_systemModel}
 
\begin{figure}[!t]
	\begin{center}
		\vspace{-0.8cm}
		\includegraphics[width=9.2cm]{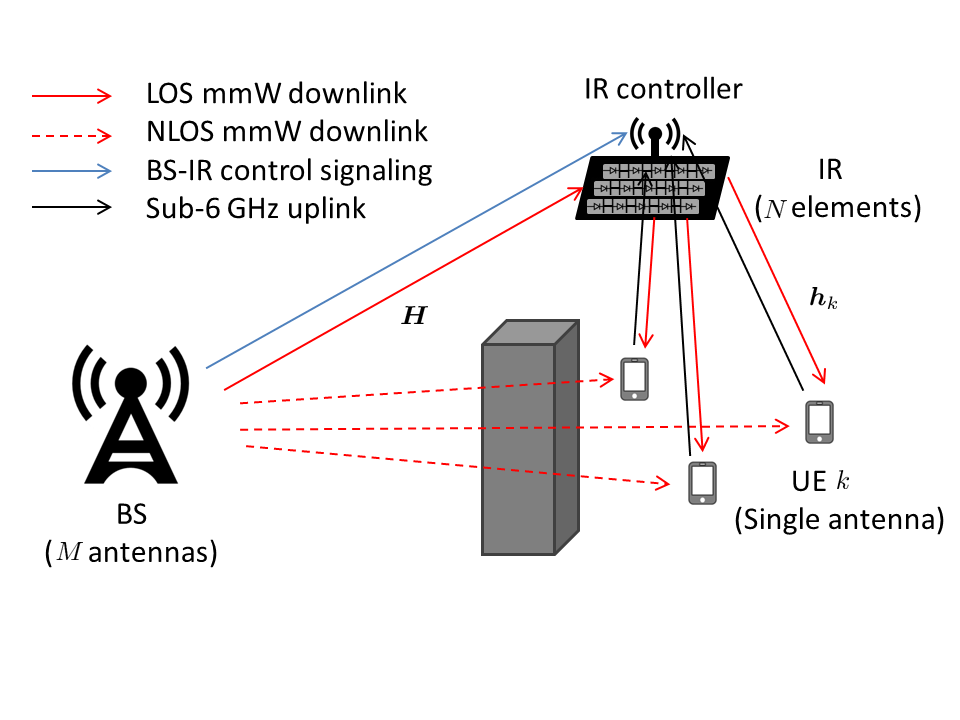}
		\vspace{-1.2  cm}
		\caption{\label{systemmodel}\small 
			The multi-user downlink communications from a multi-antennas BS, reflected via an IR, to a group of NLOS UEs, over mmW links.  
		}  
	\end{center}\vspace{-1  cm}
\end{figure}

Consider a cellular base station (BS) that provides mmW  downlink  communication to a set $\mathcal{K}$  of  $K$ wireless users. 
We assume that the served users are located within a hotspot area, and each UE is equipped with a single antenna\footnote{IR-aided multiple-input-multiple-output communication for multiple-antenna UEs will  be subject to our future work.}. 
In order to compensate for the fast attenuation of mmW signals, the BS is equipped with $M \in \mathbb{N}^{+}$ directional antenna arrays for  beamforming.  Therefore, in a LOS case, downlink communication can be reliable and efficient.  
However,  the BS-UE link can be blocked by common objects, such as buildings and foliage, which can seriously attenuate mmW signals.  
In particular, due to the high density of UEs within a hotspot area, the body of the human user becomes a main blockage source. 
Therefore, as done in \cite{huang2019reconfigurable} and \cite{nadeem2020asymptotic}, we assume that the direct link from the BS to each hotspot UE is always NLOS, and, thus, the received signal via the direct link is negligible.  

To overcome mmW blockage and improve the received UE power, an IR can be deployed to assist the mmW downlink transmission.   
As shown in Fig. \ref{systemmodel}, an IR can potentially replace one direct NLOS link with two connected LOS links, by reflecting the mmW signals from the BS towards each served UE.        
We assume that the IR consists of $N \in \mathbb{N}^{+}$  reflective components.  
By controlling the conductivity of its metasurfaces, the reflection coefficient of each IR element can be dynamically adjusted.    
Note that an IR is a passive device, which cannot sense any CSI or process received signals.  
To enable  information exchange, an active antenna must be embedded onto the IR controller to receive control signal from the BS and feedback information from  served UEs. 
As shown in Fig. \ref{systemmodel}, the BS assigns a specific channel to the IR  for control signaling.  
Meanwhile, the UE's feedback can be obtained via the uplink.  
Note that, this active antenna at the IR controller only receives and processes signals for control purposes, while downlink communication signals are transferred  through the reflection component at the IR. 
Moreover, we assume that the BS applies directional transmissions with a narrow beam, such that  the BS-IR channel has only one single path. Therefore, each UE will receive mmW signals all over the IR-related channels.

\subsection{Communication model} \label{commModel}

Our system can be viewed as a multi-user multiple-input-single-output (MISO) communication model,  
in which the BS transmits information to downlink UEs via a common frequency band, while being assisted by an IR \cite{huang2019reconfigurable}.  
The BS-IR channel  is denoted as $\boldsymbol{H} \in \mathbb{C}^{N\times M}$, and the IR-UE link for each UE $k \in \mathcal{K}$ is given as $\boldsymbol{h}_k \in \mathbb{C}^{1\times N}$.   
In order to provide downlink communications to multiple UEs, the BS precodes the transmit signal as an $M\times 1$ vector $\boldsymbol{s} = \sum_{k=1}^{K} \boldsymbol{w}_k \beta_k$, where  $\boldsymbol{w}_k \in \mathbb{C}^{M\times 1}$ is the precoding vector,  and $ \beta_k$ is the unit-power  information symbol  for UE $k$.   
Here, the power allocation is subject to a maximum power constraint $P_{\textrm{max}}$ of the BS, where $\| \boldsymbol{s} \|^2 = \sum_{k \in \mathcal{K}}  \|\boldsymbol{w}_k \|^2 \le P_{\textrm{max}}$. 

We consider that all IR components are equally spaced in a two-dimensional plane to form the IR. 
Let  $\mathcal{N}$ be the index set of $N$ IR elements. 
For each component $n \in \mathcal{N}$, we denote the phase shift of the reflection by $\theta_n \in [0,2\pi) $ and the amplitude reflection coefficient by $a_n  \in [0,1]$.  
The independent control of the phase  $\theta_n$ and amplitude $a_n$ for each IR component  $n \in \mathcal{N}$ was designed and experimentally characterized  in \cite{dai2018independent} and \cite{jia2016broadband}. 
Consequently, the IR's reflection coefficient will be given by ${\Theta} = \text{diag} (a_1  e^{j\theta_1},\cdots, a_N e^{j\theta_N} )$.  
Therefore, after the reflection, the received signal at UE $k$ will be: $	y_k =  \boldsymbol{h}_k \Theta \boldsymbol{H} \boldsymbol{s} + z_k$, 
where $z_k \sim \mathcal{CN}(0,\sigma^2)$ is the receiver noise at UE $k$. 
In order to separate the mmW propagation environment with the IR’s reflection, we rewrite the received signal at UE $k$ equivalently
in the following form: \vspace{-0.2cm} 
\begin{equation}
	y_k = \boldsymbol{\phi} \boldsymbol{G}_k  \boldsymbol{s} + z_k,  
\end{equation}   
where $\boldsymbol{\phi} = [a_1e^{j\theta_1},\cdots, a_N e^{j\theta_N}] \in \mathbb{C}^{1 \times N}$ is a vector of the reflection coefficient and $\boldsymbol{G}_k = \text{diag} ( \boldsymbol{h_k}) \boldsymbol{H} \in \mathbb{C}^{N \times M} $ is the CSI of the connected BS-IR-UE link towards UE $k$ without any phase shift.    
For tractability, we assume that  UEs' locations remain unchanged during one coherence time of the mmW transmission.   
Therefore, the downlink signal-to-interference-and-noise ratio (SINR)  from the BS, reflected by the IR, to each UE $k \in \mathcal{K}$ can be given by, 
\begin{equation}\label{sinr}  
	\eta_k ( \boldsymbol{W},  \boldsymbol{\phi}) =	
	\frac{  |\boldsymbol{\phi} \boldsymbol{G}_k \boldsymbol{w}_k |^2 }{  \sum_{i\ne k, i\in \mathcal{K}}  |\boldsymbol{\phi} \boldsymbol{G}_k \boldsymbol{w}_i  |^2  + \sigma^2},    
\end{equation}  
where
$\boldsymbol{W} = [\boldsymbol{w}_1,\cdots,\boldsymbol{w}_K] \in \mathbb{C}^{M \times K} $ is a precoding matrix at the BS.  
Consequently, the downlink sum-rate that the IR-assisted  communication system provides to all UEs is  
\begin{align}\label{equCapacity}
  r(\boldsymbol{W},   \boldsymbol{\phi}) =  \sum_{k=1}^K b\log_2 \left( 1+  \eta_k( \boldsymbol{W},  \boldsymbol{\phi}) \right),  
\end{align}  
where $b$ is the downlink bandwidth.

\subsection{Channel Measurement} \label{channelMeasure}

Considering the susceptibility of mmW signals to blockage,  a simple body movement of the human user can substantially change the CSI  of the BS-IR-UE link for each UE.  
Therefore, a real-time  estimation is necessary to measure the accurate CSI value, so that the reflection coefficient can be optimally determined.  
Here, we let ${\mathcal{G}} = \{ \boldsymbol{G}_k \}_{\forall k\in\mathcal{K}}$ be the CSI set of the BS-IR-UE links of each downlink UE. 

The CSI of a wireless link is assumed to be constant within each coherence time slot $\tau = \lambda_f/ v_{\text{e}} $, which is the ratio of the carrier wavelength $\lambda_f$ to the UE's speed $v_{\text{e}}$. 
In a hotspot area,  UEs  are often confined to geographically constrained spaces.
As a result,  the speed of each UE can be very small, and the coherence time of the IR-UE communication links can be sufficiently long to support real-time channel measurement.  
Meanwhile, the works in  \cite{va2017inverse,va2016impact,va2015basic} show that using directional transmission can effectively reduce the  channel variation. Thus, the  narrow beamwidth of beamforming transmissions ensures  a slow change of the CSI over the  BS-IR link. 
Moreover, given that we consider a single-path channel, the small-scale fading caused by multipath propagation is negligible. 
Therefore, the BS-IR-UE channel will experience a longer coherence time which enables the real-time channel estimation. 

Given that next-generation cellular networks will use dense small cell deployments, 	we assume that the distances of the BS-IR and IR-UE links are short enough, such that the mmW signal, reflected by a single IR element, can be properly captured at its receiver.  
Since the IR is a passive device that cannot transmit or decode any signal,   a time division duplex approach is used whereby the BS-IR-UE  channels are estimated, by exploiting the channel reciprocity, using the uplink pilot signals from UEs \cite{nadeem2019intelligent}. 
Therefore, within one channel coherence time slot, three sequential phases\cite{yang2019intelligent} are employed to measure the  CSI: Uplink training phase, processing phase, and downlink transmission phase, as shown in Fig. \ref{cohenrenceTime}. 
\begin{figure}[!t]
	\begin{center}
		\vspace{-1.2cm}
		\includegraphics[width=12cm]{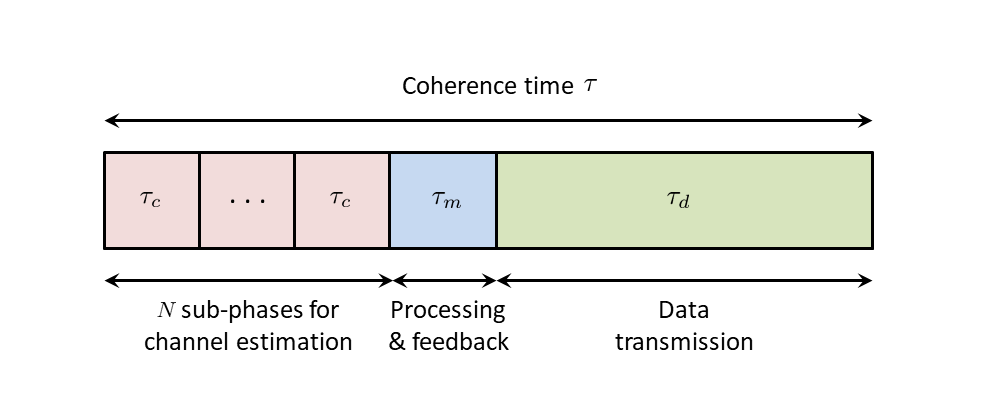} 
		\vspace{-0.4cm}
		\caption{\label{cohenrenceTime}\small One coherence time slot $\tau$ is divided into three sub-phases, where $\tau = N\tau_c + \tau_m + \tau_d$. } 
	\end{center}\vspace{-1 cm}
\end{figure}

Throughout the uplink training phase within a time duration  $N\tau_c$, each UE $k \in \mathcal{K}$ transmits mutually orthogonal pilot symbols ${s}_k$, where $|{s}_k|^2 = p_c$ is the transmit power,  
and an ON/OFF state control  is applied to the IR. An IR element $n \in \mathcal{N}$ in the ON state will reflect the incident signal without any phase shift, in which case we have $a_n=1$ and $\theta_n=0$; while an IR element in the OFF state will capture the incident signals without any reflection, i.e., $a_n=0$.    
Here, the uplink training phase is evenly divided into $N$ sub-phases,    
and within each sub-phase $\tau_c$, only one IR component  is ON\footnote{In the ON state, the  energy loss of signal reflection will be counted in the measurement of  CSI.}, while all the other IR components are OFF.   
Let $\boldsymbol{g}_{n,k} \in \mathbb{C}^{1\times M}$ be the $n$-th row vector of the channel matrix $\boldsymbol{G}_k$.
Then, the received training signal vector from the downlink UEs, reflected by the $n$-th IR element, to the BS can be expressed as  $\boldsymbol{y}_n = \sum_{k=1}^{K} \boldsymbol{g}^{H}_{n,k} s_k + \boldsymbol{z}$,  where $\boldsymbol{z} \sim \mathcal{CN}(\boldsymbol{0},\sigma^2_{\text{BS}} \boldsymbol{I}_M)$ is the noise vector at the BS.  
Once the uplink pilot phase is finished, the BS will receive $N$ independent observation vectors $\{ \boldsymbol{y}_n\}_{n=1, \cdots,N}$, and the signal processing phase with a fixed duration $\tau_m$  starts.    
Based on the  minimum-variance least-square estimation, the  channel vector from the BS, reflected by the $n$-th IR element, to UE $k$ can be calculated via    
\begin{equation}
	\hat{\boldsymbol{g}}_{n,k} = (\boldsymbol{y}_n s_k^{-1})^H \triangleq {\boldsymbol{g}}_{n,k} + \tilde{\boldsymbol{g}}_{n,k},
\end{equation} 
where $\tilde{\boldsymbol{g}}_{n,k} = ( \boldsymbol{z}  s_k^{-1})^H $ is the uncorrelated  estimation error.   
Therefore, the measurement results of the downlink channel matrix for each UE $k \in \mathcal{K}$ can be denoted by $\hat{\boldsymbol{G}}_k = [\hat{\boldsymbol{g}}^T_{1,k}, \cdots, \hat{\boldsymbol{g}}^T_{N,k}  ]^T \in \mathbb{C}^{N \times M}$. 
At the end of the  signal processing phases, the BS will send the  channel estimation  result $\hat{\mathcal{G}} =\{ \hat{\boldsymbol{G}}_k\}_{\forall k \in \mathcal{K}}$ to the IR, via the control channel.    
Once the processing phase is finished,  the transmission phase $\tau_d$ will start, during which the IR will provide the downlink service towards UEs, while optimizing the reflection coefficient to maximize the downlink rate in (\ref{equCapacity}).

However, the theoretical upper-bound of the transmission rate in (\ref{equCapacity}) is difficult to achieve in practice. 
This is because the optimization of the reflection coefficient $\boldsymbol{\phi}$ is based on the estimated CSI $\{\hat{\boldsymbol{G}}_k\}_{k=1,\cdots,K}$, which is subject to measurement errors.   
Here,  the mean square error (MSE) of the channel estimation for each BS-IR-UE MISO channel can be calculated by 
\begin{equation}
	\begin{aligned}
	\mathbb{E} \left\{\| \hat{\boldsymbol{G}}_k - \boldsymbol{G}_k \|^2_F \right\} 
	 = \mathbb{E} \left\{ \sum_{n=1}^N \| \tilde{\boldsymbol{g}}_{n,k} \|^2  \right\}  
	 = \sum_{n=1}^N  \mathbb{E} \left\{  \frac{\| \boldsymbol{z}  \|^2}{p_c}  \right\} = \frac{NM\sigma_{BS}^2}{p_c}. 
	\end{aligned}
\end{equation}
Given that the uplink power $p_c$ of each UE is strictly limited, as the numbers of BS antennas $M$ and  IR elements $N$ increase, the MSE of the estimated CSI will become larger.  
In order to guarantee an efficient reflection transmission, it is necessary to optimize the IR coefficient, using the feedback from downlink UEs.    
Hence, beyond the pilot-aided training at the beginning of the time slot, more information about  CSI can be acquired from UEs' feedback during the following transmission phase to improve downlink performance, as detailed in Section \ref{optLimitedCSI}.  

\begin{figure}[!t]
	\begin{center}
		\vspace{- 1  cm}
		\includegraphics[width=13.5cm]{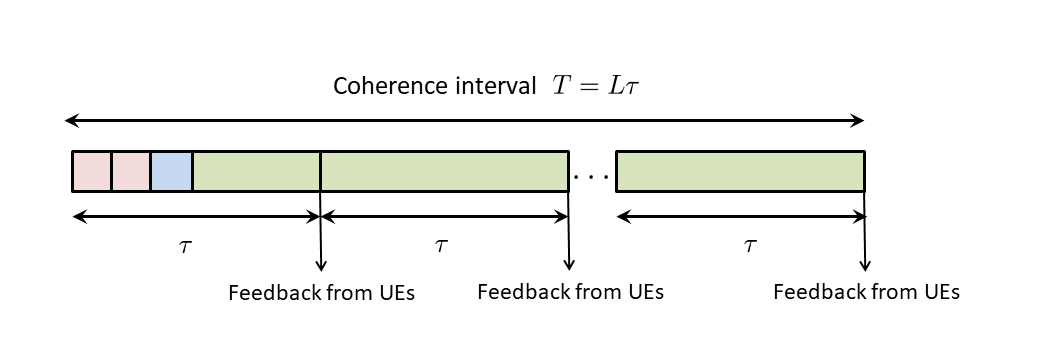} 
		\vspace{-0.4cm}
		\caption{\label{twotimescales}\small Multiple time slots are grouped into a coherence interval $T = L\tau$, during which the channel measurement is employed  only during the first $\tau$, while all the following steps are used for data transmission.}  
	\end{center}\vspace{-1 cm}
\end{figure}

Furthermore, we note that as the number of IR elements $N$ becomes larger, the effective duration $\tau_d = \tau - N\tau_c -\tau_m$ for downlink communications will decrease.  
In order to alleviate the heavy communication overhead of channel estimation, we group multiple time slots to form a longer coherence interval $T = L \tau$, where $L \in \mathbb{N}^{+}$, such that the channel measurement is employed only at the beginning of the first time slot $\tau$, while all remaining time steps are  used for downlink communication, as shown in Fig. \ref{twotimescales}.   
This  framework is supported by the experimental results in  \cite{va2016impact} and \cite{va2015basic}, where the coherence time of a wireless channel is shown to increase at least proportionally to the inverse of the beamwidth.
% \cite{va2015basic}, where the mmW channel was shown to exhibit a slower change as the beamwidth of directional transmissions narrows. 
Therefore, given a narrow beam in the downlink transmission,   
we can consider the IR transmission over a sufficiently long interval $T$.  
In consequence, the total downlink transmission that the IR provides to the hotspot UEs within one coherence interval $T$ is 
\begin{equation}\label{aveCapacity}
	R( \boldsymbol{W}, \boldsymbol{\phi}  | \hat{\mathcal{G}}) =  \sum_{l=1}^{L} t_l \cdot r_l( \boldsymbol{W},  \boldsymbol{\phi}  | \hat{\mathcal{G}}), 	
\end{equation}
where $t_l$ is the transmission duration in  the $l$-th time slot, and  $r_l(\boldsymbol{W}, \boldsymbol{\phi}| \hat{\mathcal{G}}) $ is the downlink sum-rate defined in (\ref{equCapacity}). 
Here,  condition $\hat{\mathcal{G}}$  represents the imperfect knowledge that the BS has about the BS-IR-UE links when optimizing the beamforming and reflection coefficients. 
However, $\hat{\mathcal{G}}$ does not change any variables in the equation of $r_l$, and the channel matrix in $r_l$ is still the real physical  channel coefficient $\{ \boldsymbol{{G}}_k\}_{\forall k}$.   %=  \sum_{k=1}^K b\log_2 \left( 1+ \frac{  |\boldsymbol{\phi} \boldsymbol{G}_{l,k}\boldsymbol{w}_k |^2 }{  \sum_{i\ne k, i\in \mathcal{K}}  |\boldsymbol{\phi} \boldsymbol{G}_{l,k}\boldsymbol{w}_i  |^2  + \sigma^2} \right)
Based on Fig. \ref{twotimescales}, we have $t_l = \tau_d$ for $l=1$ and $t_l = \tau$ for $l=2,\cdots,L$. 
Meanwhile, if the values of $\boldsymbol{W}$  and $\boldsymbol{ \phi }$ are identical for all $l = 1,\cdots, L$, then $r_l = r$ and  $R = (T -N\tau_c - \tau_m ) \cdot r \triangleq   T_d \cdot r$, where $T_d=T -N\tau_c - \tau_m$ is the effective transmission time during one coherence interval.  
Compared with the communication duration $\tau_d = \tau - N\tau_c -\tau_m$  from the channel estimation framework in Fig. \ref{cohenrenceTime}, the proposed framework in Fig. \ref{twotimescales} significantly prolongs the averaged  transmission time within each  coherence interval $T$.  

\subsection{Problem Formulation}
Our goal is to jointly optimize the precoding matrix at the BS and the reflection coefficient of the IR, such that the total  data transmissions that the IR provides to downlink UEs within a coherence interval can be maximized, i.e.: 
\begin{subequations}   \label{equOptall}  
	\begin{align} 
	\max_{  \boldsymbol{W},   \boldsymbol{\phi} }\quad &   R ( \boldsymbol{W},  \boldsymbol{\phi}  | \hat{\mathcal{G}})    \label{equOptbsall}\\
	\textrm{s. t.} \quad  
	& \sum_{k \in \mathcal{K}} \| \boldsymbol{w}  \|_k^2\le P_{\textrm{max}}, \label{conPowerPrecoder} \\  
	& |\phi_n| \le 1, \forall n \in \mathcal{N}.   \label{conReflection}  
	\end{align}
\end{subequations} 
The objective function in (\ref{equOptbsall}) is the summation of downlink data that the IR-aided transmission provides to all UEs within one $T$.  
(\ref{conPowerPrecoder}) is the  power limitation at the BS, and (\ref{conReflection}) is the reflection constraints at the IR.  
The optimization problem (\ref{equOptall}) is challenging to solve for two reasons.    
First, during each coherence time, the objective function (\ref{equOptbsall}) is non-convex  with respect to $\boldsymbol{W}$ and $\boldsymbol{\phi}$.     
Second,  uplink pilot training is subject to measurement errors, which leads to  inaccurate CSI and renders the IR-aided transmission less reliable.     
In order to address the aforementioned challenges, first, in Section \ref{optPerfectCSI}, we analyze a simple case in which the estimated CSI is error-free. 
In this section,  the beamforming transmission at the BS and the reflection coefficient of the IR will be jointly optimized, under perfect CSI. 
Next,  in Section \ref{optLimitedCSI},
a learning-based approach can be applied to capture the uncertainty of the downlink channel and enable the optimal reflection at the IR, under imperfect CSI.

\section{Optimal transmission and reflection with perfect CSI} \label{optPerfectCSI}  

In this section, the precoding matrix  at the BS and the reflection coefficient at the IR will be jointly optimized in order to maximize the downlink transmission capacity, under the assumption of perfect CSI.   
During each  time slot,  (\ref{equOptall}) is reduced to the following form: 
\begin{subequations}   \label{equOpt1}
	\begin{align} 
	\max_{  \boldsymbol{W}, \boldsymbol{\phi} } \quad &   r ( \boldsymbol{W}, \boldsymbol{\phi} ) \label{1equOpt}\\
	\textrm{s. t.} \quad  
	& \sum_{k \in \mathcal{K}} \| \boldsymbol{w}  \|_k^2\le P_{\textrm{max}}, \label{1conPowerPrecoder} \\    
	& |\phi_n| \le 1, \forall n \in \mathcal{N}.   \label{1conReflection}     
	\end{align}
\end{subequations} 
In order to solve this non-convex problem, we apply the Lagrangian dual transform method \cite{shen2018fractional}, by introducing an auxiliary variable $\boldsymbol{\alpha}$. 
Then, the objective function (\ref{1equOpt}) can be equivalently rewritten  as: 
\begin{equation}\label{sumrateAlpha}
r_{\alpha} (\boldsymbol{W} ,\boldsymbol{\phi}, \boldsymbol{\alpha}) = b\sum_{k=1}^K \left(\log_2 (1 + \alpha_k)   -\alpha_k + \frac{(1+\alpha_k) \eta_k}{1+\eta_k }  \right).
\end{equation} 
Thus, we equivalently reformulate the optimization problem (\ref{equOpt1}) into the following form:
\begin{subequations}   \label{equOptalpha}
	\begin{align} 
	\max_{  \boldsymbol{W}, \boldsymbol{\phi},\boldsymbol{\alpha} } \quad &   r_{\alpha} (\boldsymbol{W} ,\boldsymbol{\phi}, \boldsymbol{\alpha})  \label{alphaequOpt}\\
	\textrm{s. t.} \quad  
	& \sum_{k \in \mathcal{K}} \| \boldsymbol{w}  \|_k^2\le P_{\textrm{max}}, \label{alphaconPowerPrecoder} \\    
	& |\phi_n| \le 1, \forall n \in \mathcal{N}.    \label{alphaconReflection}     
	\end{align}
\end{subequations} 
Next, we  solve (\ref{equOpt1}) by solving its equivalent problem (\ref{equOptalpha}) using the alternating optimization method \cite{jain2013low}.  

First, by holding $\boldsymbol{W}$ and $\boldsymbol{\phi}$  fixed and setting $\frac{\partial  r_{\alpha} }{\partial  \alpha_k} = 0$, we can find the optimal value of $\alpha_k$ as $\alpha^o_k = \eta_k$. Then, for  a fixed $\boldsymbol{\alpha}$, the optimization problem  is reduced to
\begin{subequations}\label{equsOptsub2}
	\begin{align}
	\max_{  \boldsymbol{W} ,\boldsymbol{\phi} } \quad &   \sum_{k=1}^K  \frac{\hat{\alpha}_k \eta_k}{1+\eta_k}   \label{equOptsub2}\\
	\textrm{s. t.} \quad  & (\text{\ref{1conPowerPrecoder}}), (\text{\ref{1conReflection}}), 
	\end{align}
\end{subequations}
where $\hat{\alpha}_k = b (1+\alpha_k)$. 
Given that (\ref{equsOptsub2}) is a multiple-ratio fractional programming problem, we can fix the value of $\boldsymbol{W}$ and $\boldsymbol{\phi}$ alternatively, and solve the optimization problem via an iterative approach, detailed as follows.

\subsection{Optimal precoding matrix at BS with perfect CSI} 
For a fixed $\boldsymbol{\phi}$, the optimal precoding problem becomes
\begin{subequations}\label{equsOptsub3}
	\begin{align}
	\max_{  \boldsymbol{W}   } \quad &  f(\boldsymbol{W}) = \sum_{k=1}^K \hat{\alpha}_k  \frac{  |\boldsymbol{\phi} \boldsymbol{G}_k  \boldsymbol{w}_k |^2 }{  \sum_{i=1 }^{K} |\boldsymbol{\phi} \boldsymbol{G}_k  \boldsymbol{w}_i  |^2  + \sigma^2}  \label{equOptsub3}\\
	\textrm{s. t.} \quad  & \sum_{k=1}^{K}\|\boldsymbol{w}_k \|^2 \le P_\textrm{max}. 
	\end{align}
\end{subequations} 	
The multiple-ratio fractional programming function in (\ref{equOptsub3}) is equivalent to 
\begin{equation}
	  f_{\lambda}(\boldsymbol{W},\boldsymbol{\lambda}) =   - \sum_{k=1}^K |\lambda_k|^2  (\sum_{i=1}^K | \boldsymbol{\phi} \boldsymbol{G}_k  \boldsymbol{w}_i |^2 + \sigma^2)+\sum_{k=1}^K   2 \sqrt{\hat{\alpha}_k} \text{Re} \{ \lambda_k  \boldsymbol{\phi} \boldsymbol{G}_k  \boldsymbol{w}_k  \} ,
\end{equation}   
where  $\boldsymbol{\lambda} \in \mathbb{R}^{K \times 1}$ is an auxiliary vector \cite[Corollary 2]{shen2018fractional}. 
Since $f_{\lambda}$ is a convex function with respect to both $\boldsymbol{w}_k$ and ${\lambda}_k$, $\forall k$, an iterative approach can be applied to optimize  $\boldsymbol{W}$ and $\boldsymbol{\lambda}$ alternatively. 
First, we fix the value of $\boldsymbol{W}$ and set $\frac{\partial f_{\lambda} }{\partial \lambda_k}=0$. Then, the optimal value of $\lambda_k$ is 
\begin{equation}\label{zerolambda}
\lambda^o_k = \frac{\sqrt{\hat{\alpha}_k}  \boldsymbol{\phi} \boldsymbol{G}_k  \boldsymbol{w}_k}{\sum_{i=1}^K | \boldsymbol{\phi} \boldsymbol{G}_k  \boldsymbol{w}_i |^2 + \sigma^2 }.
\end{equation} 
Then, by fixing $\boldsymbol{\lambda}$, the optimal $\boldsymbol{w}_k$ can be given by 
\begin{equation} \label{zerow} \vspace{-0.1cm}
\boldsymbol{w}_k^o = \sqrt{\hat{\alpha}_k} \lambda_k \boldsymbol{\phi} \boldsymbol{G}_k \left( \kappa_o \boldsymbol{I}_M + \sum_{i=1}^K |\lambda_i|^2 (\boldsymbol{\phi} \boldsymbol{G}_i )(\boldsymbol{\phi} \boldsymbol{G}_i )^H    \right)^{-1}   ,
\end{equation}
where $\kappa_o \ge 0$ is the minimum value such that $\sum_{k=1}^K\| \boldsymbol{w}_k^o \|^2 \le P_{\text{max}}$ holds. 
By alternating between (\ref{zerolambda}) and (\ref{zerow}),  the value of the objective function in (\ref{equOptsub3}) will increase. 
%, which improves the downlink sum-rate of IR-aided communications.
%the precoding matrix will eventually converge to the unique and optimal value $\boldsymbol{W}^{*}$,  with a computational complexity of $\mathcal{O}(M^4)$. 

\subsection{Optimal reflection at IR with perfect CSI}

Next, we fix the value of $\boldsymbol{W}$ and optimize the reflection coefficient $\boldsymbol{\phi}$ in (\ref{equsOptsub2}), i.e.:    
\begin{subequations}\label{equsOptsub4} \vspace{-0.1cm}
	\begin{align}
	\max_{  \boldsymbol{\phi}   } \quad &  f(\boldsymbol{\phi}) = \sum_{k=1}^K \hat{\alpha}_k  \frac{  |\boldsymbol{\phi} \boldsymbol{G}_k  \boldsymbol{w}_k |^2 }{  \sum_{i=1 }^{K} |\boldsymbol{\phi} \boldsymbol{G}_k  \boldsymbol{w}_i  |^2  + \sigma^2}  \label{equOptsub4}\\
	\textrm{s. t.} \quad  &  |\phi_n | \le 1,  \quad \forall n \in \mathcal{N}.  
	\end{align}
\end{subequations}
Similarly, an auxiliary vector $\boldsymbol{\delta}$ is introduced, so that (\ref{equOptsub4}) equivalently becomes  
$f_{\delta} (\boldsymbol{\phi},\boldsymbol{\delta}) =  \sum_{k=1}^K   2 \sqrt{\hat{\alpha}_k} \text{Re} \{ \delta_k  \boldsymbol{\phi} \boldsymbol{G}_k  \boldsymbol{w}_k  \} $ $ - \sum_{k=1}^K |\delta_k|^2  (\sum_{i=1}^K | \boldsymbol{\phi} \boldsymbol{G}_k  \boldsymbol{w}_i |^2 + \sigma^2)$, which is convex with respect to both $\boldsymbol{\phi}$ and $\boldsymbol{\delta}$.   
Hence, the value of the reflection coefficients $\boldsymbol{\phi}^*$ can be optimized via an approach that is similar to the optimization process used for the precoding  matrix.   
First, the optimal $\delta_k$ for a given $\boldsymbol{\phi}$ is 
\begin{equation}\label{zerodelta}
\delta_k^o = \frac{\sqrt{\hat{\alpha}_k}  \boldsymbol{\phi} \boldsymbol{G}_k  \boldsymbol{w}_k}{\sum_{i=1}^K | \boldsymbol{\phi} \boldsymbol{G}_k  \boldsymbol{w}_i |^2 + \sigma^2 }. 
\end{equation}
Then, the optimization of the reflection coefficient for a fixed $\delta_k$ becomes
\begin{subequations}\label{equsOptsub5}\vspace{-0.2cm}
	\begin{align}
	\max_{  \boldsymbol{\phi}   } \quad &  f(\boldsymbol{\phi}) =  -\boldsymbol{\phi} \boldsymbol{U} \boldsymbol{\phi}^H + 2\text{Re}\{ \boldsymbol{\phi} \boldsymbol{v} \} -C  \label{equOptsub5}\\
	\textrm{s. t.} \quad  &  |\phi_n | \le 1,  \quad \forall n \in \mathcal{N},   \label{cons}
	\end{align}
\end{subequations}
where  $C = \sum_{k=1}^K |\delta_k|^2\sigma^2$,  $\boldsymbol{v} = \sum_{k=1}^K {\delta_k}^H \boldsymbol{G}_k \boldsymbol{w}_k$, and $\boldsymbol{U} = \sum_{k=1}^K |\delta_k|^2 \sum_{i\ne k}  \boldsymbol{G}_k  \boldsymbol{w}_i ( \boldsymbol{G}_k  \boldsymbol{w}_i)^H $. 
Given that $\boldsymbol{U}$ is a positive-definite matrix,   $ f(\boldsymbol{\phi})$ is quadratic concave with respect to $\boldsymbol{\phi}$. 
Meanwhile,  constraint (\ref{cons}) is a convex set.
Thus, (\ref{equsOptsub5}) is solvable using a Lagrange dual decomposition \cite{guo2019weighted}. % with a computational complexity of $\mathcal{O}(N^6)$.      

Therefore, given the CSI $\boldsymbol{G}_k$ of each BS-IR-UE link,  the precoding matrix $\boldsymbol{W}^*$ and reflection coefficient $\boldsymbol{\phi}^{*}$ can be optimized, based on Algorithm \ref{algo1}, to improve the downlink transmission sum-rate. %ly  and uniquely  determined.  
Once (\ref{equOpt1}) is  solved, the BS will send the optimal solution $\boldsymbol{W}^*$ and $\boldsymbol{\phi}^{*}$  to the IR via the control channel. 
%The iterative approach to jointly optimize the precoding matrix  and reflection coefficient is summarized in . 
Here, we stress that Algorithm \ref{algo1} is derived under the assumption of perfect CSI. 
However, in presence of  channel estimation errors, Algorithm \ref{algo1} cannot guarantee an efficient downlink transmission.     
Therefore, under limited CSI, it is necessary for the IR to optimize the reflection parameter coefficient $\boldsymbol{\phi}$, based on the feedback of downlink UEs.

\begin{algorithm}[!t] \small  
	\caption{Optimal precoding  and reflection coefficients with perfect CSI} \label{algo1}
	\begin{algorithmic}
		\State \textbf{Initialize} the reflection coefficient $\boldsymbol{\phi} $  and the precoding matrix $\boldsymbol{W}$ \\ 
		\textbf{Repeat}  \\ 
		\quad 1. Update the auxiliary variable $\alpha_k = \eta_k$, $\forall k \in \mathcal{K}$; \\
		\quad 2. Optimize $\boldsymbol{W} $ by alternating between (\ref{zerolambda}) and (\ref{zerow}); \\ 
		\quad 3. Optimize $\boldsymbol{\phi} $  by alternating between (\ref{zerodelta}) and (\ref{equsOptsub5}); \\
		\textbf{Until} the value of $r_{\alpha}$ in (\ref{sumrateAlpha}) converges.  
	\end{algorithmic}
\end{algorithm}

\section{Optimal reflection with  limited CSI}
\label{optLimitedCSI}

In this section, the reflection optimization of the IR is studied, considering the estimation error in the measured CSI $\hat{\mathcal{G}}$, while Algorithm \ref{algo1} is used as initialization for the IR-aided transmission.  

\subsection{Problem formulation} 

During the communication stage, at the end of each coherence time $l$, the IR can get feedback from each UE $k$ about the downlink transmission performance.  
If the received signal power $|y_{l,k}|^2$ for each UE $k$ is small, then the optimal reflection from Algorithm \ref{algo1} is actually not reliable. %, due to the channel estimation error.    
In this case, the IR controller can adjust the reflection coefficient,  to improve the downlink transmission rate in a long term, based on the UEs' feedback.  
The optimization problem for the reflection adjustment of the IR at the end of each time slot $l = 1,\cdots,L-1$ is given as:
\begin{subequations} \label{equOpt_b} 
	\begin{align}
	\max_{ \Delta \Theta} \quad &   \sum_{t=l+1}^{L}  r_{t}(\boldsymbol{\phi}_{l} \times \Delta \Theta  |  \boldsymbol{y}_{l}, \hat{\mathcal{G}},  \boldsymbol{W}^* )   \label{equOptuavTau_b}  \\   
	\textrm{s. t.} \quad  
	& | \Delta \theta_n| = 1,  \forall n \in \mathcal{N},    \label{equConstrainIR_b}
	\end{align}
\end{subequations}  
where  $\boldsymbol{\phi}_l = [{\phi}_{l,1},\cdots,{\phi}_{l,N}]$ is the reflection coefficient during the $l$-th time slot, $ \Delta {\Theta} = \text{diag}( \Delta \theta_1,$ $ \cdots, \Delta \theta_N ) \in \mathbb{C}^{N \times N}$ is an adjustment matrix that shifts the phase of the reflection parameter $\boldsymbol{\phi}_l$ via $\boldsymbol{\phi}_l \times \Delta {\Theta}$ while keeping the reflection amplitude the same, and 
$\boldsymbol{y}_l  = [y_{l,1},\cdots, y_{l,K}]   \in \mathbb{C}^{1 \times K}$ is the received signal vector of downlink UEs.  
Objective function (\ref{equOptuavTau_b}) is the sum of downlink data rates of all UEs from the $l+1$-th time slot to the end of the coherence interval, given limited  CSI.  %(\ref{equOptuavTau_b}) is the downlink transmission sum-rate  under ,} 
(\ref{equConstrainIR_b}) is the  amplitude constraints of the reflection adjustment.   
The overall  process of the IR-aided transmissions under imperfect CSI is summarized in  Algorithm \ref{algo2}. 
The  value of $ \boldsymbol{W}^*$ remains the same during the transmission phase, and we will omit it in the following discussion.  

\begin{algorithm}[!t]  \small   
	\caption{Overall process of the IR-aided transmission under imperfect CSI} \label{algo2}
	\begin{algorithmic}
		\State  
		1. \textbf{Uplink training phase}:  estimate the downlink CSI $\hat{\mathcal{G}}$  as elaborated in Section \ref{channelMeasure}; \\ 
		2. \textbf{Processing phase}: optimize the precoding matrix $\boldsymbol{W}^*$ and reflection parameter $\boldsymbol{\phi}^*$ based on Algorithm \ref{algo1}; \\  
		3. \textbf{Transmission phase}: At the end of each time slot $l = 1,\cdots, L-1$, \\
		\quad A. Each UE $k$ updates its received signal $y_{l,k}$ via uplink;\\
		\quad B. The IR controller adjusts the reflection parameter $\boldsymbol{\phi}$ in (\ref{equOpt_b}), based on Algorithm \ref{algo3}  in Section \ref{secDRL}.    
	\end{algorithmic}
\end{algorithm}

The optimization problem  (\ref{equOpt_b}) is very challenging to solve for two reasons.  
First, due to the receiver's noise, it is impossible to have a perfect CSI via uplink training phase. 
One traditional approach to cancel the effect of noise is to average the estimation results from multiple measurements.    
However, repeated  measurements will consume significant time and power, which is impractical in the considered problem. 
Second, for each UE $k$, the scalar feedback $y_{l,k}$ during each time slot cannot provide sufficient information for the IR to reconstruct the channel matrix $\boldsymbol{G}_k$ which has $N\times M $ scalar elements.   
Therefore, a learning-based framework is introduced to improve the reflection parameter of the IR during the service of downlink transmission, in which the BS and IR are not required to have perfect measurement on the CSI or have an explicit knowledge of  channel statistics. 
By observing the UEs’ feedback during each time slot, the proposed algorithm can automatically acquire the real-time downlink statistics. %Moreover, an offline process can be applied to train the RL model, such that the online optimization of the IR reflection can be done in a real-time and efficient manner. 

\subsection{Reinforcement learning framework}  

In order to improve the reflection coefficients of the IR for each time slot, a distributional reinforcement learning (DRL) framework is designed to capture the measurement errors in the estimated CSI and model the relationship between the reflection coefficients and the downlink communication performance. 
As shown in Fig. \ref{RLmodel}, in our DRL framework,  the connected channels $\{\boldsymbol{ \phi } \boldsymbol{G}_k \}_{ \forall k}$ represent the communication \emph{environment},  
the IR controller is  the \emph{agent} that takes action $\Delta \Theta$ to change the reflection coefficient from $\boldsymbol{\phi}$ to $\boldsymbol{\phi} \times \Delta \Theta$, and 
the communication \emph{state} is a deviation vector $\boldsymbol{e} = [e_{1},\cdots,e_{K}] \in \mathbb{C}^{1 \times K}$, where for each UE $k$, the signal deviation $	e_{k} = y_{k} -\hat{y}_{k} = \boldsymbol{\phi}(\boldsymbol{G}_k - \hat{\boldsymbol{G}}_k) \boldsymbol{s} + z$ evaluates the accuracy of the measured downlink CSI $\{\hat{\boldsymbol{{G}}}_k\}_{\forall k}$. 
% is the \emph{state} vector of the $l$-th time slot
%where the value of downlink channel $\boldsymbol{G}_k$ is fixed but unknown to the IR. 
%Consequently, the IR controller with limited  CSI can improve the reflection parameter via $\boldsymbol{\phi}_{l} \times \Delta {\Theta}$, at the end of each slot $l = 1, \cdots, L-1$,  in order to maximize the total amount of downlink data  from the next time slot $l+1$ to the end of the coherence interval $l=L$,   i.e.    
%In order to model the agent-environment interactions, a Markov decision process (MDP) model $(\mathcal{E}, \mathcal{A}, r, P, \gamma)$ is used,  where $\mathcal{E}$ is the space of all possible states $\boldsymbol{e}$, and $\mathcal{A} = \{\Delta \Theta| \forall n, |\Delta \theta_n| = 1\}$ is the space of actions, 
At the end of each time slot, the IR receives a communication reward $r$, which is defined as the downlink sum-rate towards all UEs  in (\ref{equCapacity}). 
Here, due to the small-scale fading of mmW channels, the downlink CSI may vary between different time slots, even for a fixed reflection coefficient of the IR. Thus, it is more suitable to consider the reward $r$ as a random variable with respect to each communication state $\boldsymbol{e}$ and reflection action $\Delta \Theta$, rather than a determined value.   
Meanwhile, $P(\boldsymbol{e}^{'}|\boldsymbol{e},\Delta \Theta)$ is the transition probability of the state from $\boldsymbol{e}$ to $\boldsymbol{e}^{'}$ after taking action $\Delta \Theta$.

\begin{figure}[!t]
	\begin{center}
		\vspace{- 1.5 cm}
		\includegraphics[width=12cm]{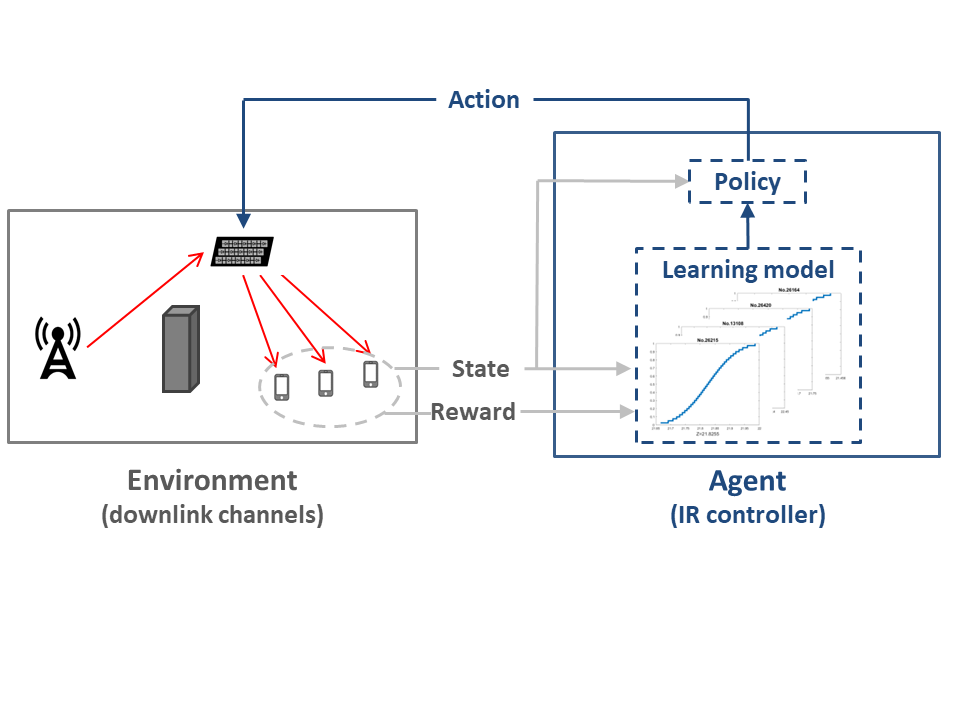} \vspace{- 2.2 cm}
		\caption{\label{RLmodel}\small An illustration of the RL framework for learning and optimizing the IR-aided downlink transmissions. }  
	\end{center}\vspace{-1 cm}
\end{figure}

At the end of each time slot $t$, after observing a deviation vector $\boldsymbol{e}_l$, the IR aims to adjust its reflection coefficient to accommodate the actual downlink CSI via $\boldsymbol{\phi}_{l+1}=\boldsymbol{\phi}_l \times \Delta \Theta$, such that the sum of future data rate $r_{t}$ for the following time slot $t = l+1,\cdots,L$ can be maximized. 
In order to quantify  the potential of each action matrix $\Delta \Theta$ to improve the future reward, first, a \emph{policy} $\pi(\Delta \Theta|\boldsymbol{e})$ is introduced to define the probability that the IR controller will adjust the reflection coefficient by $\Delta \Theta$, under a current state  $\boldsymbol{e}$. 
Then, for each stationary policy $\pi$, the potential of each state-action pair $(\boldsymbol{e}_l,\Delta \Theta_l)$ on improving the downlink transmission rate is defined by the sum of the discounted rewards, as follows:  
\begin{equation}\label{return}  
\begin{aligned}
Z^{\pi}(\boldsymbol{e}_l,\Delta \Theta_l)  = \sum_{t=l+1}^{\infty} \gamma^{t-l}~ r(\boldsymbol{e}_t,\Delta \Theta_{t}|   \boldsymbol{y}_l,\hat{\mathcal{G}}), 
\end{aligned}	 
\end{equation}
where $\Delta \Theta_{t} \sim \pi(\cdot|\boldsymbol{e}_{t})$,  $\boldsymbol{e}_{t+1} \sim P(\cdot|\boldsymbol{e}_{t}, \Delta \Theta_t )$, and $\boldsymbol{\phi}_{t+1} = \boldsymbol{\phi}_{t} \times \Delta \Theta_t$.   
Here, $\gamma \in (0,1)$  discounts the future rewards in the current estimation for each state-action pair. 
If $ \gamma \rightarrow 1$ and $L \rightarrow  \infty$,  the return function in (\ref{return}) approximates the objective function in (\ref{equOptuavTau_b}).  
Meanwhile,  if the CSI is assumed to be constant in each time slot, then, by fixing the value of the reflection coefficient $\boldsymbol{\phi}$ to be $\boldsymbol{\phi}_t^{*}$,  (19a) can be rewritten as $R_t= \sum_{t=l+1}^{L} r(\boldsymbol{\phi}_t^{*})$, and (20) becomes $Z_t = \sum_{t=l+1}^{\infty} \gamma^{t-l} r(\boldsymbol{\phi}_t^{*})$. 
Given that $\gamma$ is constant,  maximizing $Z_t$ with respect to $\boldsymbol{\phi}$ will be equivalent to the optimization of $R_t$. 
Thus, (20) provides an accurate framework for the IR controller to determine the efficient reflection coefficient   $\boldsymbol{\phi}_t^{*}$ at the time slot $t$,  such that the sum of the downlink data rates in the following time slots can be maximized.   
Thus, this cumulative discounted reward in (\ref{return}) is called the \emph{return value} that the IR can achieve by adjusting the reflection coefficient to $\boldsymbol{ \phi }_l \times \Delta \Theta$ fro the next communication slot.  
Meanwhile, given that $r$ is a random variable, it is necessary to model a distribution function of (\ref{return}) to identify the return value for each state-action pair.  
Once the return distribution is known, the optimal policy  $\pi$ that maximizes the expectation of the cumulative rewards can be defined by
%Next, given the current state $\boldsymbol{e}_l$, the optimal policy that selects the best action matrix is defined to maximize such a return function,  i.e.  
\begin{equation} \label{max_return}
	\Delta \Theta^{*}_{l} = \arg\max_{\Delta \Theta} Z^{\pi}(\boldsymbol{e}_l,\Delta \Theta).
\end{equation} 
Thus, the optimal reflection parameter for the next time slot will be  $\boldsymbol{\phi}_{l+1} = \boldsymbol{\phi}_{l} \times \Delta \Theta^{*}_l$. 

% Therefore, an MDP framework properly models the optimization problem in (\ref{equOpt_b}).   
%Meanwhile, an RL framework can adjust the reflection parameter $\boldsymbol{\phi}_{l+1}$ to accommodate the actual downlink CSI and maximize the downlink rate at the end of each time slot,  based on the real-time feedback $\boldsymbol{y}_l$, via (\ref{max_return}).    
%Thus, the IR controller does not need to measure channels for multiple times to get an accurate CSI, but use the RL model to learn the communication environment during  downlink transmissions.  

Here, we note that a DRL framework is different from conventional reinforcement learning approaches, such as deep Q-learning \cite{mnih2013playing}, because a traditional RL method predicts the future return by a scalar, while, in our work, we aim to account for the uncertainty of the reward function, by modeling the sum-rate ${r}$ as a distribution function.  
Given that the CSI  error $\tilde{\boldsymbol{g}}_{n,k}$ in (4) is a random variable, 
the  precoding and reflection coefficients that  are optimized using the imperfect CSI will result in a random downlink sum-rate. 
Thus, given each choice of a reflection coefficient, the corresponding reward $r$ can be considered to be a random variable.   
Compared with a scalar value, a return distribution provides more information on the IR-assisted transmission features, and, thus, it enables more accurate predictions on the future reward.     
Here, we stress that the distribution is not designed to capture the uncertainty in the estimation of the reward, but rather to model the intrinsic randomness in the interaction between the IR  and  communication environment, due to the small-scale of mmW channels and the existence of noise in the channel estimation and information exchange phases. 
%Moreover, the higher dimension of the communication environment $\boldsymbol{{G}}_k$, compared with the state space $\boldsymbol{e}$,   induces more uncertainty of the reward function with respect to each state-action pair.   Note that, for each UE $k \in \mathcal{K}$, its feedback $y_k$ has a lower rank than its downlink channel matrix $\boldsymbol{G}_k$. Thus, each signal vector $\boldsymbol{y}$ corresponds to multiple channel states, and each deviation vector $\boldsymbol{e}$ maps to different kinds of errors in the measured CSI $\hat{\mathcal{G}}$.  In other word,  even if the same action $\Delta \Theta$ is taken in the same state $\boldsymbol{e}$, different rewards can be obtained in the next time slot.    
The IR-aided communication has intrinsic  uncertainty in its downlink rate, and, thus,  
%Furthermore, the received noise during the information exchange phase makes the reward  even less predictable,  and  the rate expression  in (\ref{equCapacity}) cannot provide an accurate estimation on the downlink rate.    
%Therefore, 
it is more suitable to consider the downlink sum-rate ${r}$ as a random variable, with respect to each state-action pair $(\boldsymbol{e},\Delta \Theta)$, rather than a fixed number.   
%Consequently, compared with a conventional RL scheme the DRL framework is more suitable to model the considered IR-aided communication problem. % to enable a more accurate prediction on the future rewards.  

\subsection{Distributional reinforcement learning with quantile regression} \label{secDRL}

%In the DRL framework \cite{bellemare2017distributional},  the return is modeled as a distribution over each state-action pair $(\boldsymbol{e},\Delta \Theta)$, and the policy $\pi$ maps each  state $\boldsymbol{e} $ to a probability distribution over the action space. 
%Then, the optimal policy $\pi^{*}(\cdot|\boldsymbol{e})$  that maps each state to the optimal action to maximize the expected reward can be determined by (\ref{max_return}), which solves the optimization problem (\ref{equOpt_b}).   
 
In order to model the return distribution $Z^{\pi}$ for each state-action pair, a quantile regression (QR) method \cite{dabney2018distributional} is applied. 
A Q-quantile model $Z_Q$  approximates the target distribution $Z^{\pi}$ by a discrete  function with variable locations of $Q$ supports and fixed quantile of $\frac{1}{Q}$ probabilities, for a fixed integer $Q \in \mathbb{N}^+$.   %$\mathcal{Z}_Q$ be the space of quantile distribution for fixed $Q$,  and  
Mathematically, a Q-quantile model is denoted by  $Z_Q =[ z_1(\boldsymbol{e},\boldsymbol{\phi}), \cdots,$ $ z_Q(\boldsymbol{e},\Delta \Theta)  ] $, with %\in \mathcal{Z}_Q % be the support of any $Q$-quantile  distribution . 
%Then, the $Q$-quantile approximation forms a discrete distribution, where $f_{Z_Q}(z_q ) = \frac{1}{Q}$ , and the corresponding 
a cumulative probability $F_{Z_Q}(z_q ) = \frac{q}{Q}$ for $q = 1,\cdots,Q$.  
Therefore, for each state-action $(\boldsymbol{e},\Delta \Theta) $, the objective is to find the optimal locations of each support in $Z_Q$ minimizes the ``distance'' between the target $Z^{\pi}$ and the Q-quantile model $Z_Q$ can be minimized. 
Here, 1-Wasserstein metric is used to define the ``distance'' between two distributions by
\begin{equation} \label{Wmetric}
	d_1 (Z_Q,Z) =  \int_{0}^{1} |F^{-1}_{Z_Q}(\omega) - F^{-1}_{Z}(\omega)|\dif \omega  ,  %\left(\right)   ^p \right)^{\frac{1}{p}}
\end{equation}
where  $F^{-1}_X$ is the inverse cumulative probability of distribution $X$.  
Based on Lemma 2 in \cite{dabney2018distributional}, the optimal value of each support $z^{*}_q$ that minimizes the 1-Wasserstein distance in (\ref{Wmetric}) is $z^{*}_q  = F_Z^{-1}(\frac{2q-1}{2Q})$. 
Fig. \ref{FourQuantiles} shows an example of 1-Wasserstein minimizing projection onto a $4$-quantile estimation of a target distribution. 

\begin{figure}[!t]
	\begin{center}
		\vspace{- 1  cm}
		\includegraphics[width=9cm]{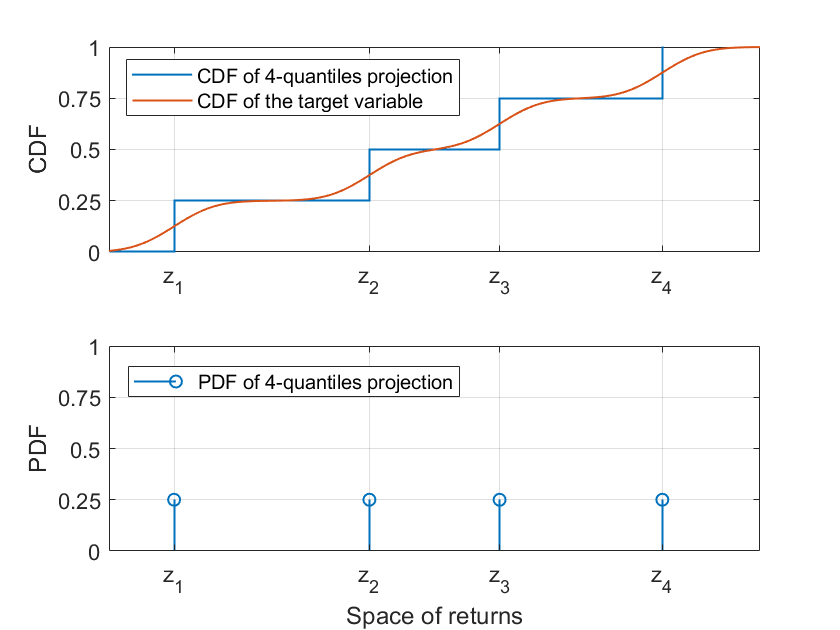} \vspace{- 0  cm}
		\caption{\label{FourQuantiles}\small An example of 1-Wasserstein minimizing projection  of a target distribution $Z$ (red line), with 4-quantiles $[z_1,z_2,z_3,z_4]$ (blue line).}  
	\end{center}\vspace{-1   cm}
\end{figure}

However, in the considered problem, the return distribution $Z^{\pi}$ of the downlink sum-rate is not explicitly known.  
Thus, the optimal approximation result  $z^{*}_q  = F_Z^{-1}(\frac{2q -1 }{2Q})$ is not directly available. 
In order to approximate the return distribution,  an empirical distribution $\hat{Z} \sim Z^{\pi}$ will be formed, based on the transmission feedback of $\boldsymbol{e}_l$, $\Theta_l$  and $r_{l+1}$ during each time slot, and $ {Z}$ is used as the target distribution to model the return  approximation $Z_Q$. %, where $\hat{Z}$ is obtained from.  
In order to evaluate the approximate accuracy,  %the empirical distribution $\hat{Z}$, %find an optimal ion model,  
we define the quantile regression loss  between $ {Z}$ and the Q-quantiles approximation $Z_Q$, as \cite{dabney2018distributional}
\begin{equation}\label{QRloss}
	\mathcal{L}_Z(Z_Q) = \sum_{q=1}^{Q} \mathbb{E}_{Z \sim Z^{\pi}} \left[ |\omega_q - \mathbbm{1}_{ z < z_q}| \cdot(z - z_q)^2  \right],   
\end{equation}
where $\omega_q = \frac{2q-1}{2Q}$, $|\omega - \mathbbm{1}_{z < z_q}|$ is the weight of  regression loss penalty  and $(z - z_q)^2 $ is the square of approximation error. 
%Note that  the quantile regression loss is an asymmetric convex loss function that 
Therefore, the quantile regression loss penalizes the overestimation error with weight $1-\omega_q$ and underestimation error with weight $\omega_q$. 
%However, we note that, the minimization of convex loss function in (\ref{QRlossOriginal}) is not simply, because $\mathcal{L}_Z(Z_Q)$ is not derivable at $Z_Q=\hat{Z}$. In order to facilitate the derivation, we modify the quantile loss function, using a Huber loss, defined by: % as follow:  
%\begin{align}\label{QRloss}
%\mathcal{L}_Z(Z_Q) = \mathbb{E}_{\hat{Z}\sim Z} \left[ |\omega - \mathbbm{1}_{\hat{Z} < Z_Q}| \cdot H(\hat{Z} - Z_Q) \right],   
%\end{align}
%where  $H(u) = \frac{1}{2}u^2 $ for $|u|\leq \kappa$; otherwise, $H(u) = \kappa(|u| - \frac{\kappa}{2}) $, where $\kappa >0$. 
%\begin{comment}
%\[ H(u) =
%\begin{cases} 
%\frac{1}{2}u^2 & \text{if}~ |u|\leq 1, \\
%|u| - \frac{1}{2}  & \text{otherwise}.  
%\end{cases}
%\]
%\end{comment}
The loss function (\ref{QRloss}) is derivable everywhere, %and the approximation problem that minimizes the quantile loss becomes   \vspace{-0.4cm} 
and the problem of the return distribution modeling
becomes to minimize the quantile regression loss, i.e., 
\begin{equation}\label{1WminProj} 
\min_{ z_1, \cdots, z_Q }  \sum_{q=1}^{Q} \mathbb{E}_{Z \sim Z^{\pi}} \left[ |\omega_q - \mathbbm{1}_{ z < z_q}| \cdot(z - z_q)^2  \right].                                               
\end{equation}
%where $\omega_q = \frac{2q-1}{2Q}$. 
Since the objective function (\ref{1WminProj}) is convex with respect to $Z_Q$,  we can find the minimizer $\{ z^{*}_q \}_{q=1,\cdots,Q}$ by using gradient-descent approaches with a computational complexity of $\mathcal{O}(Q^2)$.   % and gives unbiased sample gradients  
As a result, for each state-action pair, its return distribution $Z_Q(\boldsymbol{e},\Delta \Theta) $ can be approximated by a  Q-quantile $\{ z^{*}_1(\boldsymbol{e},\Delta \Theta) , \cdots, z^{*}_Q(\boldsymbol{e},\Delta \Theta)  \}$ via (\ref{1WminProj}).  %$(\boldsymbol{e},\Delta \Theta) $

\subsection{Optimal reflection using the DR-QRL model}
In the problem of IR-aided transmissions, after observing a deviation vector $\boldsymbol{e}_l$, the IR controller can estimate the expected downlink rate for each action  $\Delta \Theta$, by computing the marginal distribution of the return distribution $Z_Q(\boldsymbol{e}_l,\Delta \Theta)$, via
\begin{align} \label{r_Z}
	 \mathbb{E} [  Z_Q(\boldsymbol{e}_l,\Delta \Theta ) ] = \frac{1}{Q} \sum_{q=1}^{Q} z_q(\boldsymbol{e}_l,\Delta \Theta).  
\end{align} 
Here, (\ref{r_Z}) evaluates the summation of future transmission rates from the next time slot $l+1$ to the end of the transmission interval, if an action $\Delta \Theta$ is applied to adjust the IR coefficient, given the current signal deviation vector $\boldsymbol{e}_l$. %, defined in (\ref{return}). %from the net coherence time $t=l+1$. % observed along one trajectory of states while following $\pi$, 
Then, the optimal reflection coefficient  ${\boldsymbol{\phi}}_{l+1} = \boldsymbol{\phi}_l \times \Delta \Theta^{*}$ that maximizes the summation of future downlink data rate is   % can be determined
\begin{align} \label{equOptIRlearnig}
	\Delta \Theta^{*} = \arg\max_{\Delta \Theta} \mathbb{E} [  Z_Q(\boldsymbol{e}_l,\Delta \Theta ) ] = \arg\max_{\Delta \Theta}\frac{1}{Q} \sum_{q=1}^{Q} z_q(\boldsymbol{e}_l,\Delta \Theta) .  
\end{align} 
After the IR provides downlink service to UEs using the reflection coefficient $\boldsymbol{\phi}_{l+1}$, the new state $\boldsymbol{e}_{l+1}$ and reward $r_{l+1}$ will be updated with the IR controller at the end of the $l+1$ time slot.    
%The new information about 
Given the downlink transmission,  the empirical distribution  can be updated via a Q-learning approach, where ${z}_i(\boldsymbol{e}_{l},\Delta \Theta_{l}) \leftarrow r_{l+1} + \gamma {z}^{l}_i(\boldsymbol{e}_{l+1},\Delta \Theta_{l+1})$, $\forall i = 1,\cdots,Q$.
In the end, the return distribution $Z^l_Q$ is updated to minimize the distance from the target distribution ${Z}^{\pi}$, based on (\ref{1WminProj}). 
The training and update algorithm of the QR-DRL model for the real-time optimization of the IR reflection is summarized in Algorithm \ref{algo3}.
 
\begin{algorithm}[!t] \small   
	\caption{QR-DRL approach to improve the IR coefficient with imperfect CSI} \label{algo3} 
	\begin{algorithmic}
		\State \textbf{Initialize}: $0 \ll \gamma < 1$  and the DRL model $Z^0_Q = \{z^0_1, \cdots, z^0_Q \}$ for each state-action pair $(\boldsymbol{e},\Delta \Theta) $. \\
		At the beginning of a coherence interval ($l=0$), the IR computes environment state $\boldsymbol{e}_0$, based on the estimated CSI, and optimizes the reflection coefficient $\boldsymbol{\phi}_0$, according to Algorithm \ref{algo1}. \\ 
		\textbf{For} $l = 1,\cdots, L-1$, the IR controller will\\
		\quad 1. Receive  UEs' feedback  $\boldsymbol{y}_l$, and compute  the deviation state $\boldsymbol{e}_l$ and the reward $r_l$; \\ %At the end of $l\tau$, t
		\quad 2. Calculate  the expected return value of the current state $\boldsymbol{e}_l$ with respect to each action $\Delta \Theta  $ by \\ 
		\quad \quad $V(\boldsymbol{e}_l,\Delta \Theta) = \frac{1}{Q}\sum_{i=1}^Q {z}^{l-1}_i (\boldsymbol{e}_l,\Delta \Theta)$; \\
		%\quad \quad b. Find the optimal action $\boldsymbol{\phi}_{l+1} = \arg\max_{\boldsymbol{\phi}  } V(\boldsymbol{y}_l,\boldsymbol{\phi})$;\\
		\quad 3. Adjust the reflection coefficient via  $\boldsymbol{\phi}_{l+1} = \boldsymbol{\phi}_l \times \Delta \Theta_l$, where $ \Delta \Theta_l  =\arg\max_{\Delta \Theta  } V(\boldsymbol{e}_l,\Delta \Theta)$;\\ 
		\quad 4. Update the empirical return distribution, using $\boldsymbol{e}_{l-1}$, $\Delta \Theta_{l-1}$, $\boldsymbol{e}_l$, $\Delta \Theta_{l}$, and $r_l$, via: \\ % Q-learning
		\quad \quad ${z}_i(\boldsymbol{e}_{l-1},\Delta \Theta_{l-1}) \leftarrow r_l + \gamma {z}^{l-1}_i(\boldsymbol{e}_{l},\Delta \Theta_{l})$, $\forall i = 1,\cdots,Q$; \\ 
		\quad 5. Update the QR-DRL model $Z_Q^l$ by minimizing the quantile loss between ${Z}$ and $Z_Q^l$ via:\\ % stochastic gradient  descent:
		\quad \quad   $Z^l_Q = \arg\min_{ \{z_q\}_{q=1, \cdots, Q} }$ $\sum_{q=1}^Q \sum_{i=1}^Q |\frac{2q-1}{2Q} - \mathbbm{1}_{{z}_i <z_q}|\cdot ({z}_i-z_q)^2 $; \\
		%\quad 3. The IR adjusts the reflection coefficient to $\boldsymbol{\phi}_{l+1}$. \\
		\textbf{End}  
	\end{algorithmic}
\end{algorithm}  

\subsection{Convergence of the QR-DRL method} 
Here, we analyze the convergence property of the proposed QR-DRL approach. 
%Let $\mathcal{Z}$ be the space of return distributions. Then, f  \in \mathcal{Z}
For any distributions $Z_1$, $Z_2$, the maximal form of 1-Wassertein distance is defined as \cite{bellemare2017distributional}
\begin{equation} \label{maxWmetric}  
	\bar{d}_1(Z_1,Z_2) = \sup_{\boldsymbol{e},\Delta \Theta} d_1(Z_1(\boldsymbol{e},\Delta \Theta),Z_2(\boldsymbol{e},\Delta \Theta)), 
\end{equation} 
which will be used as the distance metric to establish the convergence of the QR-DRL method. 
Let $\Pi_{W_1}$ be the 1-Wasserstein minimizing quantile projection, defined in (\ref{1WminProj}), and $\mathcal{T}^\pi$ be the distributional Bellman operator \cite{bellemare2017distributional} that defines DRL iterations,  
where $\mathcal{T}^{\pi} Z(\boldsymbol{e} ,\Delta \Theta)  =  {r}(\boldsymbol{e} ,\Delta \Theta ) + \gamma Z (\boldsymbol{e}^{'},\Delta \Theta)$. %,  where $\mathcal{T}^{\pi}$ is the distributional Bellman operator following the policy $\pi$, $\boldsymbol{e}^{'} \sim P(\cdot|\boldsymbol{e},\Delta \Theta)$, and $\Delta \Theta^{'} \sim \pi(\cdot|\boldsymbol{e})$. 
 %, following any policy. in (\ref{equOptIRlearnig}). 
Then, the convergence of the QR-DRL projection $\Pi_{W_1}\mathcal{T}^\pi$, combined by quantile regression with the DRL operator, to a unique fixed point  is given as follow. 
%Now, we prove that the QR-DRL projection $\Pi_{W_1}\mathcal{T}^\pi$ combined by quantile regression with the reinforcement learning operator can converge to a unique fixed point, via the repeated application of this combined projection. 
\begin{theorem}   \label{theorem1}
	The distance between the Q-quantile approximation $Z_Q$ and the target return distribution $Z$, in terms of maximal 1-Wassertein metric, will converge to zero, via the repeated application of the QR-DRL projection $\Pi_{W_1}\mathcal{T}^\pi$, i.e. 
	%\begin{equation*}
	$	\bar{d}_1(\Pi_{W_1}\mathcal{T}^\pi Z, \Pi_{W_1}\mathcal{T}^\pi Z_Q) \leq \gamma (\bar{d}_1(  Z,  Z_Q) + C_Q )$, 
	%\end{equation*}
	where $\gamma \in (0,1)$ is the discount factor,  $C_Q$ is a finite number that depends on the value of $Q$, and $\lim_{Q\rightarrow \infty} C_Q =0$.  
\end{theorem}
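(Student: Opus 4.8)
The plan is to decompose the combined operator $\Pi_{W_1}\mathcal{T}^\pi$ into its two constituents and control each separately, exploiting that the distributional Bellman operator is a genuine $\gamma$-contraction in $\bar{d}_1$, whereas the quantile projection is only \emph{approximately} non-expansive in the $1$-Wasserstein metric, with a defect that vanishes as $Q$ grows. The term $C_Q$ is precisely a bookkeeping device for that projection defect.

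First I would establish that $\mathcal{T}^\pi$ contracts by the factor $\gamma$ in the maximal metric $\bar{d}_1$. Fixing a state-action pair $(\boldsymbol{e},\Delta\Theta)$ and writing $\mathcal{T}^\pi Z = r(\boldsymbol{e},\Delta\Theta) + \gamma Z(\boldsymbol{e}',\Delta\Theta)$, I would couple the two random returns through a shared reward draw together with the optimal coupling of the next-state returns. Under this coupling the affine map $x \mapsto r + \gamma x$ rescales the transport cost by exactly $\gamma$, so $d_1(\mathcal{T}^\pi Z, \mathcal{T}^\pi Z_Q) \le \gamma\, d_1(Z(\boldsymbol{e}',\Delta\Theta), Z_Q(\boldsymbol{e}',\Delta\Theta)) \le \gamma\, \bar{d}_1(Z,Z_Q)$, uniformly in $(\boldsymbol{e},\Delta\Theta)$.

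Next I would analyze the projection. By Lemma 2 of \cite{dabney2018distributional}, $\Pi_{W_1}$ replaces a distribution by the $Q$-quantile distribution supported at the midpoint quantiles $\omega_q = \tfrac{2q-1}{2Q}$, so for any two distributions $A,B$ the projected cost is the midpoint Riemann sum $d_1(\Pi_{W_1}A,\Pi_{W_1}B) = \tfrac{1}{Q}\sum_{q=1}^{Q} |F_A^{-1}(\omega_q) - F_B^{-1}(\omega_q)|$, while the unprojected cost is $d_1(A,B) = \int_0^1 |F_A^{-1}(\omega)-F_B^{-1}(\omega)|\,\dif\omega$. Setting $A=\mathcal{T}^\pi Z$ and $B=\mathcal{T}^\pi Z_Q$, I would bound the gap between this Riemann sum and the integral. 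Since the downlink rates are bounded and $\gamma\in(0,1)$, every return distribution is supported on a fixed interval of width $D \le r_{\textrm{max}}/(1-\gamma)$, so each quantile function $F^{-1}$ has total variation at most $D$; the midpoint-rule error is therefore at most some $C_Q' = O(D/Q)$, uniformly over all state-action pairs, yielding $d_1(\Pi_{W_1}A,\Pi_{W_1}B) \le d_1(A,B) + C_Q'$.

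Finally I would chain the two estimates: combining the projection bound with the Bellman contraction gives $\bar{d}_1(\Pi_{W_1}\mathcal{T}^\pi Z, \Pi_{W_1}\mathcal{T}^\pi Z_Q) \le \gamma\,\bar{d}_1(Z,Z_Q) + C_Q'$, and setting $C_Q = C_Q'/\gamma$ recovers the stated form with $\lim_{Q\to\infty} C_Q = 0$. I expect the main obstacle to be the projection step rather than the contraction: unlike the $\infty$-Wasserstein metric used in the original quantile-regression analysis of \cite{dabney2018distributional}, the midpoint projection is genuinely \emph{not} a non-expansion in $d_1$, so the entire argument rests on a clean, \emph{uniform} midpoint-rule discretization bound. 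Making that bound uniform over the (potentially large) state-action space requires the boundedness of the return support and enough regularity of the quantile functions, with particular care at the jumps and flat segments of the CDFs, which is the delicate part of the estimate.
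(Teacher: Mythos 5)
Your proposal is correct and follows the same two-step decomposition as the paper's proof: a $\gamma$-contraction bound for $\mathcal{T}^\pi$ in $\bar{d}_1$ (the paper cites Lemma 3 of \cite{bellemare2017distributional}; your coupling argument proves the same fact directly), chained with an approximate non-expansion bound $d_1(\Pi_{W_1}U,\Pi_{W_1}V)\le d_1(U,V)+O(1/Q)$ for the quantile projection. The one place where your route differs is in how that $O(1/Q)$ defect is obtained: the paper applies the triangle inequality to reduce it to the sum of the two individual projection errors $\mathcal{L}_U(U_Q)+\mathcal{L}_V(V_Q)$, each bounded by $\bar{\theta}/(2Q)$, whereas you bound the gap between the midpoint Riemann sum and the integral of $|F_U^{-1}-F_V^{-1}|$ via the total variation of the quantile functions; both give the same order and rest on the same boundedness of the return support. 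A small point in your favor: the paper's final chaining writes $\gamma\bar{d}_1(Z_1,Z_2)+C_Q\le\gamma(\bar{d}_1(Z_1,Z_2)+C_Q)$, which does not hold for $\gamma<1$; your rescaling $C_Q=C_Q'/\gamma$ is the clean way to recover the stated form, and since $C_Q$ still vanishes as $Q\to\infty$ the theorem's conclusion is unaffected.
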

\begin{proof}
	See Appendix A. 
\end{proof}
Theorem \ref{theorem1} shows that by applying the QR-DRL operator $\Pi_{W_1}\mathcal{T}^\pi$ once, the distance between $Z$ and $Z_Q$ will be changed from $\bar{d}_1(  Z,  Z_Q)$ to be lower than $\gamma( \bar{d}_1(Z,  Z_Q) +   C_Q )$.  
Given that $\bar{d}_1(Z,  Z_Q)$ and $ C_Q$ both have finite values and $\gamma \in (0,1)$, after the repeated application of the projection $\Pi_{W_1}\mathcal{T}^\pi$, the distance between  $Z$ and $Z_Q$  will eventually go to zero. 
Therefore, the iterative approach based on the QR-DRL projection  $\Pi_{W_1}\mathcal{T}^\pi$ will converge the approximation $Z_Q$ to the unique fixed point $Z$, following the policy $\pi$ in (\ref{max_return}).   

Consequently, we conclude that the proposed learning approach in Algorithm \ref{algo3} will converge to a unique and stable distribution of the downlink sum-rate for the IR-aided transmission, as defined in (\ref{return}), and, as a result, the optimal policy $\pi^*$ that optimizes the IR coefficient $\boldsymbol{\phi}$  can be uniquely determined, based on (\ref{r_Z}) and (\ref{equOptIRlearnig}), which solves the optimal problem in (\ref{equOpt_b}).  
Therefore,  the reflection optimization with limited  CSI can be solved, based on to Algorithm 2 and 3, at the end of each time slot, so that the sum of future data rate  can be maximized. % from the next time slot to the end of the interval service. 
Meanwhile, as shown in Algorithm 3,  the scalar value of $z$ is updated using a Q-learning algorithm in step 4. After that, our proposed  DRL approach applies one more step  to model the distribution function of the random return value using the quantile optimization  in step 5.  
Thus, compared with  the scalar-based $z$ in the Q-learning RL method, the proposed DRL approach captures the frequency and value of the corresponding transmission rate  with respect to each chosen reflection coefficient. 
Therefore, due to the return distribution model,  the proposed DRL method can find the expected reward of the sum-rate more accurately, compared to a conventional scalar-based  RL  model.  

\section{Simulation Results and Analysis}\label{sec_simulation}

\subsection{Setting}  
\begin{table}[t!]  \vspace{-0.2cm}
	\centering
	\caption{Simulation and algorithm parameters}
	\label{table1}
	\begin{tabular}{ |p{5.5cm}|p{2cm}||p{5.5cm}|p{2cm}|  }
		\hline
		Parameters & Values & 	Parameters & Values \\
		\hline
		Bandwidth $b$  & $2$ MHz   & Number of UEs $K$ & 4 \\
		Noise power spectrum density at UE $\sigma^2$  & $-174$ dBm/Hz    & Noise power spectrum density at BS $\sigma_{BS}^2$  & $-170$ dBm/Hz\\
		Transmit power at BS $P_\text{max}$ &  $40$ dBm  & 	Transmit power at UE $p_c$ &  $10$ dBm\\
		Time slot $\tau$& $0.1$ s & Coherence   interval $T$ & 10$\tau$ \\
		Uplink training subphase $\tau_c$  \cite{nadeem2019intelligent} & $0.01\tau$ & Data transmit phase $\tau_d$ & $(1- 0.01N)\tau$\\
		Number of quantiles $Q$ & $40$  & Discount factors $\gamma$  & $0.9$ \\
		\hline
	\end{tabular}\vspace{-0.2cm}
\end{table}

In the simulations, we consider a uniform square array of antennas at the BS with $M=16$, and a uniform square array of IR with $N=16$.  
The location of BS is fixed at $(0,0,25)$, 
the location of IR is $(0, 20, 30)$
and the location of each hotspot UE is modeled by an i.i.d. two-dimensional Gaussian $(\boldsymbol{x}_k,\boldsymbol{y}_k) \sim \mathcal{N}((0,20),5\boldsymbol{I}_2)$ with zero height.  
%ozdogan2019intelligent
The path loss in dB is modeled \cite{5GCM} as $PL[dB] = 32.4  + 21 \log_{10}(d) + 20 \log_{10}(f) + \xi$, where $d$ denotes the distance between the transmitter and receiver in meter, $f$ is that the carrier frequency, and $\xi\sim \mathcal{CN}(0,\sigma^2_{sf})$ denotes the small fading parameter. 
Given the carrier frequency $f=30$ GHz, we set $\sigma_{sf} = 3.762$ for a LOS link, and $\sigma_{sf} = 8.092$ for a NLOS link. 
The MIMO channel between the BS and the IR is modeled by a rank-one matrix \cite{el2012capacity} 
%\begin{equation*} 
	$\boldsymbol{H} =   \boldsymbol{a}^r(\varphi^r_0,\psi^r_0) \boldsymbol{a}^t(\varphi^t_0,\psi^t_0)^H$,
%\end{equation*}  
where $\varphi^r_0$,  $\psi^r_0$  are the azimuth and elevation  angles of arrival at IR, and  $\varphi^t_0$,   $\psi^t_0$  are the azimuth and elevation  angles of departure at  the BS.  
The vectors $\boldsymbol{a}^r(\varphi^r_0,\psi^r_0)$ and  $\boldsymbol{a}^t(\varphi^t_0,\psi^t_0)$ are the normalized receive/transmit array response
vectors at the corresponding angles of arrival/departure, where 
\begin{equation*}
\begin{aligned}
	\boldsymbol{a}(\varphi,\psi) = [1, \cdots, & \exp( j  \pi( v \sin(\varphi) \sin(\psi) + w\cos(\psi) )), \cdots, \\
	& \exp( j  \pi( (V-1) \sin(\varphi ) \sin(\psi) +(W-1)\cos(\psi) )) ]^T.
\end{aligned}
\end{equation*} 
For the receiver IR, $\boldsymbol{a}^r(\varphi^r_0,\psi^r_0) = \boldsymbol{a}(\varphi^r_0,\psi^r_0)$ with $V^r = W^r = \sqrt{N}$, and for the transmitter BS,  $\boldsymbol{a}^t(\varphi^t_0,\psi^t_0) = \boldsymbol{a}(\varphi^t_0,\psi^t_0)$ with $V^t = W^t = \sqrt{M}$. 
Similarly, the MISO channel from the IR to each UE $k$ is modeled by a vector $\boldsymbol{h}_k(\varphi_k,\psi_k) = \boldsymbol{a}(\varphi_k,\psi_k)$ with $V_k=W_k=\sqrt{N}$, where $\varphi_k,\psi_k$ are the azimuth and elevation angles of departure at  the IR towards UE $k$.  
Thus, for each UE $k \in \mathcal{K}$, the MISO downlink  channel is given by $\boldsymbol{G}_k = \frac{\exp(-j \frac{2\pi}{\lambda}d_{k})}{\sqrt{PL_k}} \text{diag}(\boldsymbol{h}_k) \boldsymbol{H} $, where $d_k$ and $PL_k$ are the distance and  path lossof the connected BS-IR-UE link, respectively.   %modeled as
The values of other parameters are given in Table \ref{table1}.

\subsection{IR-aided transmission with perfect CSI}
 
In this section, we evaluate the performance of the IR-aided communication, given perfect downlink CSI.  
In order to show the advantage of the proposed  approach,  two baseline schemes are introduced, which are the direct transmission without IR reflection  and the fixed IR reflection.  
Note that, the optimal precoding at the BS is applied in three schemes. 
However, for the direct transmission, the BS optimizes the precoding matrix, based on the direct downlink channel. 
Meanwhile, after the optimization at the BS side,  the fixed reflection method keeps the IR coefficient $\boldsymbol{\phi} = \boldsymbol{I}_N$, while the proposed  approach optimizes the reflection coefficient $\boldsymbol{\phi}^{*}$, based on to Algorithm 1.   
The metric that is used to evaluate the communication performance is the time-average downlink sum-rate within each coherence interval $T$. 
For the proposed approach, the time-average sum-rate is calculated via $(1-\frac{N\tau_c + \tau_m}{T})r^{\text{}}$, where $N\tau_c$ is the channel estimation time, and $\tau_m$ is for signal processing.  
However, for the direct transmission and  the fixed IR reflection, the channel measurement is only one $\tau_c$, and, thus, the time-average sum-rate is  $(1-\frac{\tau_c + \tau_m}{T})r$.  
 
\subsubsection{Performance evaluation}

\begin{figure*}[!t]
	\begin{center}
		\vspace{-1cm}
		\begin{subfigure}{.49\textwidth}
			\includegraphics[width=8.9cm]{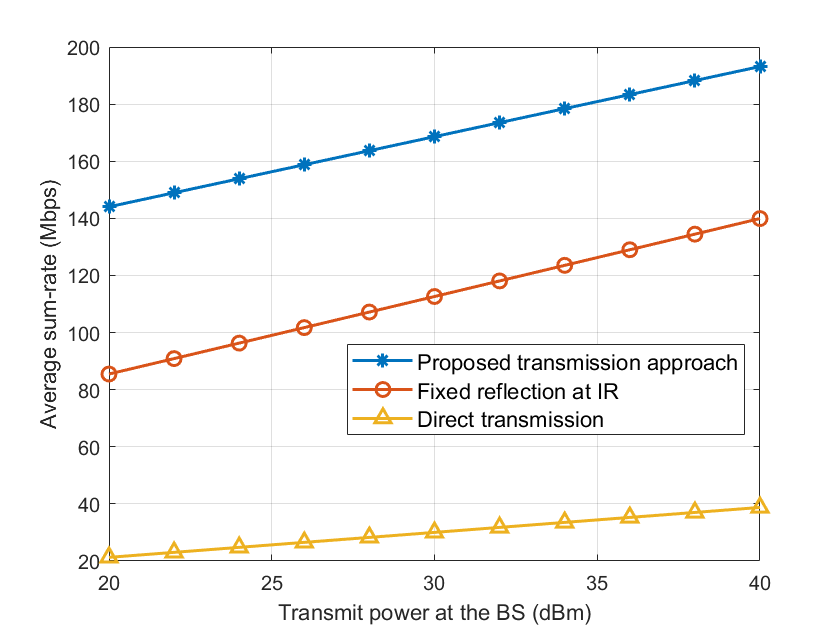}
			\caption{\label{aveRate_propose_fixedIR_direct} The time-average downlink sum-rate increases, as the transmit power at the BS increases.  
			}
		\end{subfigure}
		\begin{subfigure}{.49\textwidth}
			\centering
			\includegraphics[width=8.9cm]{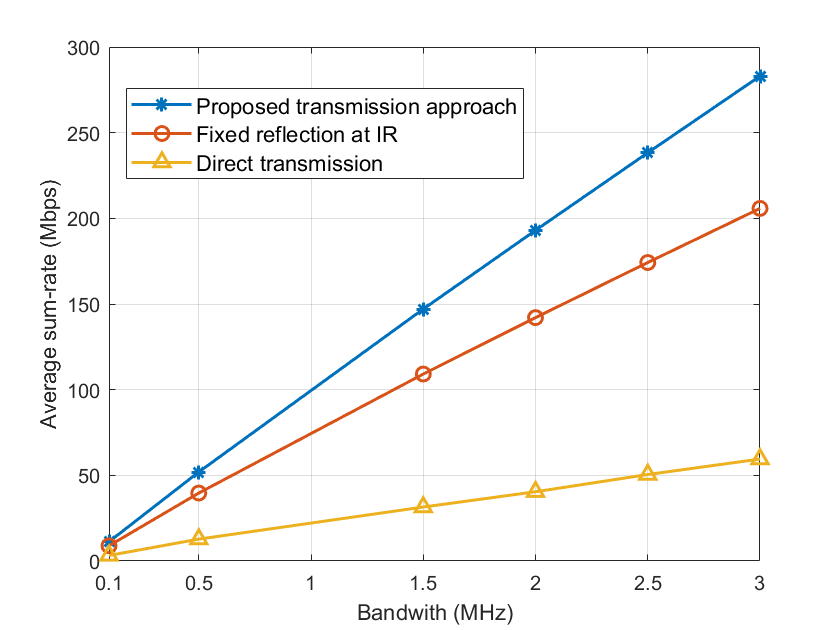}\vspace{0 cm}
			\caption{\label{aveRate_bandwidth} The time-average sum-rate increases, as the downlink bandwidth  increases.  
			}
		\end{subfigure}
		\vspace{-0.2 cm} 
		\caption{\small{\label{performance} The proposed approach outperforms two baseline schemes, as the transmit power and  bandwidth increase.}  
		}
	\end{center}
	\vspace{- 1.2  cm}
\end{figure*}  
 
Fig. \ref{aveRate_propose_fixedIR_direct} shows that, as the BS transmit power increases from $20$ to $40$ dBm, the time averages of the downlink sum-rate increases in all three schemes, and the proposed  method outperforms both baseline schemes. 
First, compared with the direct transmission without IR, the fixed reflection scheme yields  $3$-fold increase in downlink transmission performance, due to better channel state.   
Since the direct link from the BS to each UE is usually NLOS, the high path loss yields a small received power at each UE, and thus, the average downlink rate of the direct transmission is always lower than $40$ Mbps.   
However, via a passive reflection of the IR even with a fixed coefficient, the  BS-IR-UE channels can be established with LOS links, and, hence, the average rate increases significantly.   
Furthermore, compared with the fixed reflection scheme, our proposed  method further improves the reflection coefficient thus achieving a performance gain of over $50$ Mbps. 
As the transmit power of the BS increases from $20$ to $40$ dBm, the proposed approach yields a performance gain of  over $30\%$,  compared with the fixed reflection scheme.  
%the joint optimization of BS's  beamforming transmission and IR's adjustable reflection in the proposed approach yields a performance improvement from,.     

Fig. \ref{aveRate_bandwidth} shows that as the downlink bandwidth increases,  the average downlink sum-rate of all three methods increase, and the proposed approach outperforms both baselines.  
Compared with the fixed reflection scheme, the downlink sum-rate of the proposed approach yields a performance gain from $31.63\%$ to $43.20\%$, as the bandwidth increases from $0.1$ to $3$ MHz. 
Meanwhile, compared with the direct transmission,   the proposed approach shows $2$-fold increases in the downlink sum-rate. 
From  Fig. \ref{performance}, we can conclude that a large bandwidth and a higher transmit power both yield a larger downlink sum-rate for the IR-aided transmission.

\subsubsection{Impact of the number of antennas}

Fig. \ref{aveRate_vs_numAnte}  shows the relationship between the downlink sum-rate  and  the number of antennas at the BS and IR,  respectively. 
First,  Fig. \ref{numBSAnte} shows that, when the IR has a fixed number of reflective elements $N=16$, by increasing the number of BS antennas $M$  from $16$ to $100$, the average sum-rate increases for all three methods, and the proposed approach yields the best performance. 
For a larger number of transmit antennas, the diversity in fading multi-path channels can be achieved to improve link reliability and facilitate  the beamforming precoding.     
Therefore, the average downlink rate increases when we have more transmit antennas.   
%by $21.58\%$ and $51.30\%$, respectively, for the proposed scheme and the fixed reflection method,   and for the direct transmission, the spectrum efficiency increases by $1.8$ folds.   
Fig. \ref{numIRcomp} shows that, when the BS has a fixed number of antennas $M=16$, as the number of IR reflection components increases from $16$ to $100$, the average sum-rate of the proposed method  and the fixed reflection scheme will both increase,  while the performance of direct transmission remains the same.   
By deploying more reflective components at the IR, more LOS paths are created between the BS and downlink UEs, such that the received signal power at UEs will increase thus improving downlink transmission rate. %signal power at each UE will increase, and, thus, the average spectrum efficiency becomes higher.  
Since the direct BS-UE link does not pass through the IR, the number of IR components does not impact the direct transmission scheme.  
Furthermore,  by comparing Figs. \ref{numBSAnte} and \ref{numIRcomp}, we can see that deploying more IR elements is more efficient to increase the downlink transmission rate, compared with increasing the number of  BS antennas. 
%the average sum-rate is more sensitive to the number of IR components,  than the number of antennas at the BS. % increases faster , compared with BS.  % increases faster with respect of
%Therefore, 
Even though the downlink rate can be significantly improved by deploying more IR elements,   as the number of IR components becomes larger,  more time is required for channel estimation, and the effective transmission time during each coherence interval will decrease. 
Therefore, when the number of IR elements is $N>80$, the increasing slope of the IR-aided transmission rate becomes much smaller.

\begin{figure*}[!t]
	\begin{center}
		\vspace{- 1 cm}
		\begin{subfigure}{.49\textwidth}
			\centering
			\includegraphics[width=8.9cm]{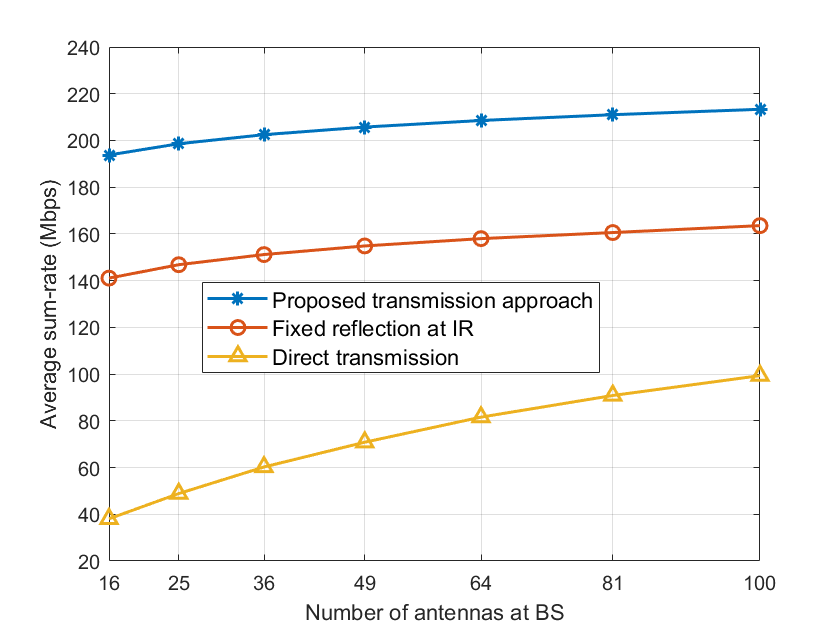} 
			\caption{\label{numBSAnte} 
			The time-average downlink sum-rate increases, as the number $M$ of BS antennas increases, for $N = 16$.   
			}
		\end{subfigure}
		\begin{subfigure}{.49\textwidth}
			\centering
			\includegraphics[width=8.9cm]{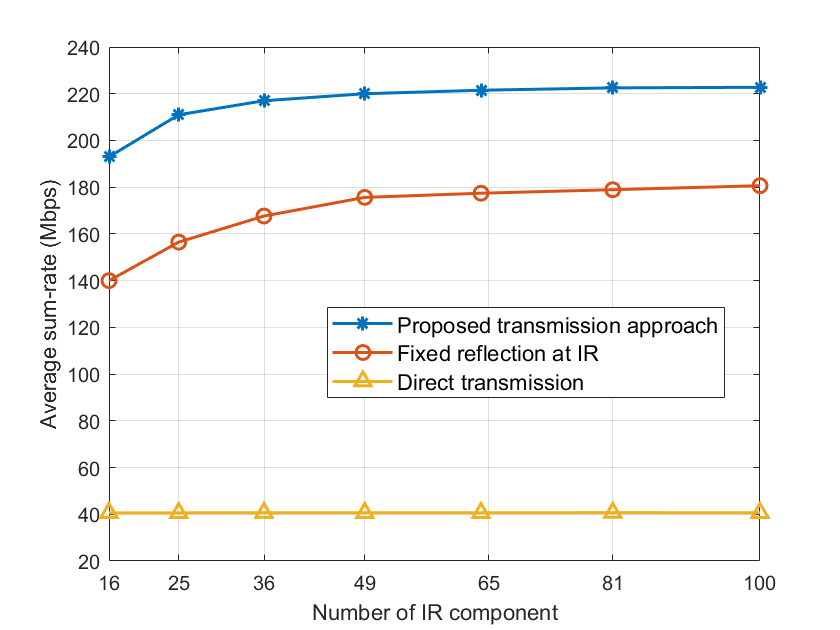} 
			\caption{\label{numIRcomp} 
			The time-average downlink sum-rate increases, as the number $N$ of IR components increases, for $M = 16$. 
			}
		\end{subfigure}
		\vspace{-0.2 cm} 
		\caption{\small{\label{aveRate_vs_numAnte} The time-average downlink sum-rate increases, as the number of BS antennas or IR components increases.}  
		}
	\end{center}
	\vspace{-1.2 cm}
\end{figure*}

\subsection{IR-aided transmission with imperfect CSI}

In this section, we evaluate the performance of the IR-aided downlink transmission, given limited knowledge of CSI.   
Algorithm 1 is used to initialize the transmission system at the beginning of each communication interval $T$, and then, the proposed QR-DRL method will be applied at the end of each time slot $\tau$, such that the distribution of downlink rate can be learned, based on the UEs' feedback, in order to improve the reflection coefficient of the IR.   

\subsubsection{Preprocessing}

The DRL approach aims to model a return distribution for each state-action pair $(\boldsymbol{e},\Delta \Theta)$. 
Given that the state-action space is continuous, it is necessary to have a discrete state-action space, such that the number of state-action pairs, as well as the estimated distribution functions, is finitely countable.  
First,  the state space  is reduced to be a $K$-dimensional binary space, where for each $k = 1,\cdots,K$, $e_k = 0$, if $|y_{k} -\hat{y}_{k}|^2 \le E_{th}$; otherwise, $e_k = 1$. Here, $E_{th}$ is a threshold of the highest acceptable power for the signal deviation $e$.  
Therefore,  $e_k = 0$ means that the signal deviation is small, and the downlink channel error is acceptable; 
otherwise, $e_k = 1$ indicates that %the received signal $y_{k}$ has a large difference from the expectation $\hat{y}_{k}$, and thus, 
the measured CSI $\hat{\boldsymbol{G}}_k$ of UE $k$ is  significantly different from  the actual downlink CSI $\boldsymbol{G}_k$. %error with respect to
Consequently, the number of possible states  is $2^K$.   
Second, in the action space, there are two possible phase change for each of the IR element $n$, where  $\Delta \theta_n = \exp(j \pi) = -1$ means to shift the phase by $\pi$, and $\Delta \theta_n =\exp(j 0  ) = 1$ means to keep the phase of the $n$-th IR component the same. 
Therefore, the number of possible actions  is $2^N$, 
%Although each $e_k$ has only two possible states and each $\Delta \theta_n$ has only two actions, 
%the size of state-action space   is $2^K \times 2^N$, 
which increases exponentially with the number of IR components.  % served UEs or
In order to reduce the size of action space, an exhausting search algorithm is first applied to find a subsection of the action space. 
As shown in Fig. \ref{actionCDF}, for a fixed number of IR components $N=16$, 
we can reduce the number of actions from $2^{16}$ to $60$, which guarantees to cover the optimal action set with a probability of around $99\%$. %, using the reduced space with $60$ actions.  

\begin{figure*}[!t]
	\begin{center}\vspace{-1  cm}
		\includegraphics[width=8.7cm]{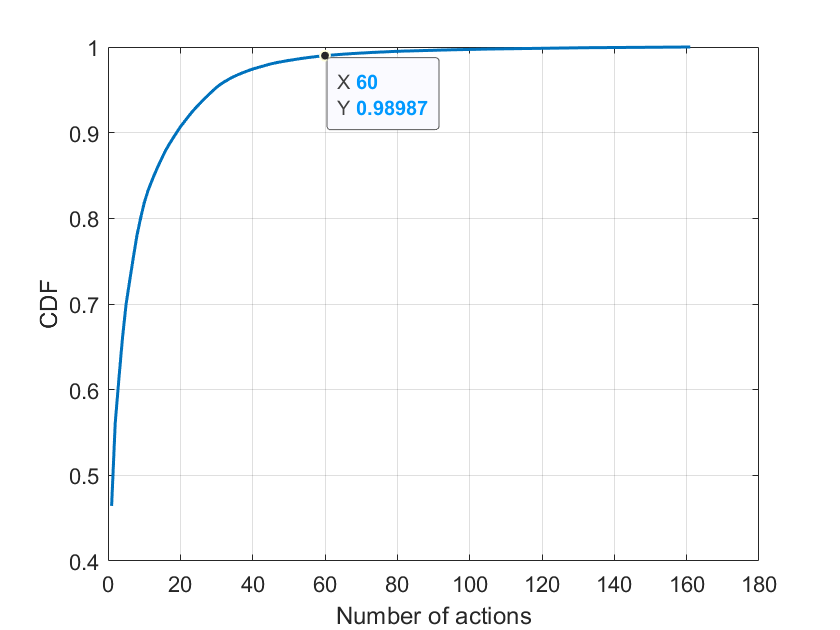}
		\caption{\label{actionCDF}  A reduced set with the top $60$ of the most frequent actions can cover the optimal result with an empirical probability of $99\%$.  % 60 actions : 99%    % 20 actions : 90%
		}
	\end{center}
	\vspace{-1  cm}
\end{figure*}

\subsubsection{Training process}

Similar to most RL-based algorithms, the proposed QR-DRL method can be slow to converge. %nce rate. 
In order to enable an efficient reflection performance, it is necessary to  train the distribution model, before the online deployment.
Here, a traditional RL scheme, based on  Q-learning algorithm, is introduced to compare the performance of the proposed method\footnote{Due to the use of neural network, deep Q-learning has a much higher computational complexity, compared with the proposed  QR-DRL approach. The comparison between deep Q-learning and deep DRL methods will be subject to our future work.}.    
Meanwhile, the result from Algorithm \ref{algo1} is used as the baseline, 
and 
the optimal result, obtained based on perfect CSI, is used as the training target.

Fig. \ref{trainingPreocess}  illustrates the training process, in which the downlink sum-rate of the IR-aided transmission is averaged per $300$ simulation episodes. 
First, Fig. \ref{trainingPreocess} shows that the proposed QR-DRL approach converges, which supports the proof in Theorem \ref{theorem1}. 
Second, compared with the Q-learning scheme, the proposed QR-DRL method has a larger variance and converges more slowly. %, but yields a higher data rate in the converged training result
Here, we note that the gap between the proposed learning method and the optimal result is caused by the discretion of the state-action space. %with the perfect CSI 

 \begin{figure*}[!t]
 	\begin{center}
 		\vspace{- 1 cm}
 		\begin{subfigure}{.49\textwidth}
 			\centering
 			\includegraphics[width=8.7cm]{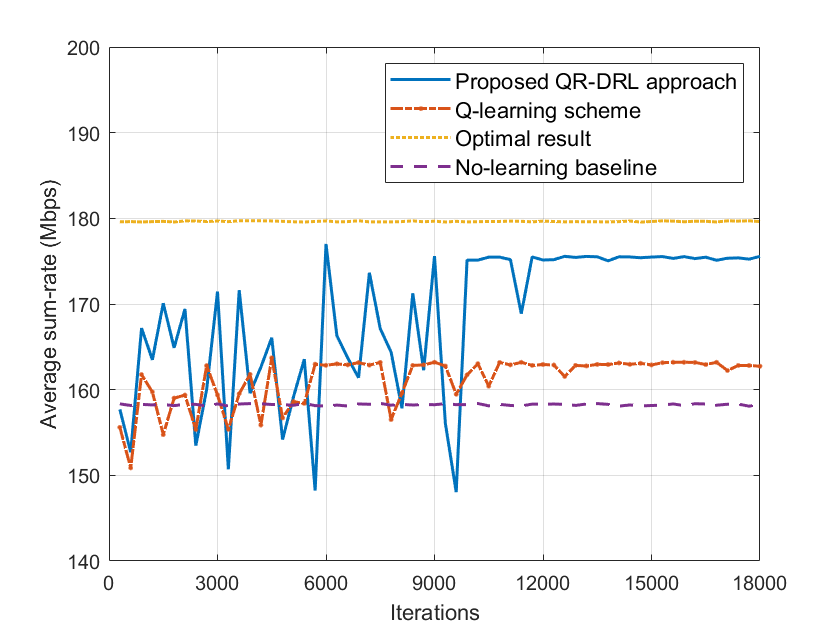} 
 			\caption{\label{trainingPreocess} Training process: Average sum-rate per 300 episodes. %Proposed QR-DRL method converges and yields a higher utility, compared with the Q-learning scheme,  in the training phase. 
 			}
 		\end{subfigure}
 		\begin{subfigure}{.49\textwidth}
 			\centering
 			\includegraphics[width=8.7cm]{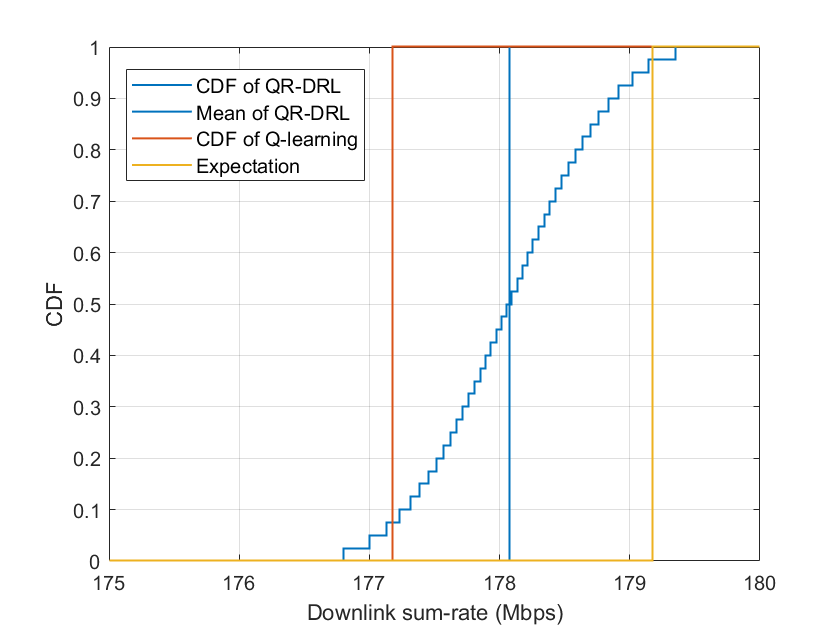} 
 			\caption{\label{state16_action13108_QR_QL} Trained distribution for  state-action  $( \boldsymbol{1}_K, \Delta \Theta_{13108})$.
 			}
 		\end{subfigure}
 		\vspace{-0.2 cm} 
 		\caption{\small{\label{proprecess_train}  Average sum-rate during the training process (left) and the training result of the state-action pair $(\boldsymbol{1}_K,\Delta \Theta_{13108})$ (right).
 			} 
 		}
 	\end{center}
 	\vspace{-1.2  cm}
 \end{figure*}

In order to provide more details on the training result, we focus on a worst-case scenario, where the deviation state $\boldsymbol{e}=\boldsymbol{1}_K$. 
In this state, the received signal at each UE is significantly different from the expectation, and, therefore, the measured CSI has a large error. 
We compare the proposed QR-DRL approach and traditional Q-learning scheme to optimize the reflection parameter under this worst state.  
Fig. \ref{state16_action13108_QR_QL} shows the CDF of the trained return distribution for the state $\boldsymbol{e}= \boldsymbol{1}_K$ with its optimal action No. 13108\footnote{To convert an action $\Delta \Theta$ to its action number, first, we replace $-1$ by $1$ and, then, we replace $1$ by $0$ on the diagonal of the action matrix. Second, we convert the diagonal binary vector to a decimal number and we add one. Thus, $\Delta \Theta_{13108}=\text{diag}(1, 1, -1, -1, 1, 1, -1, -1, 1, 1, -1, -1, 1, 1, -1, -1)$.}.  
Fig. \ref{state16_action13108_QR_QL} shows that compared with  Q-learning, the mean of the trained distribution from the proposed method is closer to the real expectation, which is calculated based on the error-free CSI.   
%Due to the larger prediction error, Q-learning is more likely, instead of the optimal action.  
Instead of choosing the optimal action  $\Delta \Theta_{13108}$, Q-learning is more likely  to select suboptimal actions  to adjust the reflection parameters, due to its larger prediction error. 
Different from Q-learning that only learn the expected returns, the QR-DRL method models the return as a distribution function for each state-action pair. 
Compared with a deterministic expectation value, a distribution function can capture more details of the uncertainty of downlink rate, and the QR-DRL method yields a more accurate prediction than Q-learning, for the specific state-action $(\boldsymbol{1}_K,\Delta \Theta_{13108})$.

%\begin{figure*}[!h] 
%	\begin{center} 	\vspace{- 0.2 cm}
%		\includegraphics[width=16cm]{state16_60actions2.png}	\vspace{- 0.2 cm}
%		\caption{\label{state16_60actions}  Training results for $60$ actions under the state $\boldsymbol{e}= \boldsymbol{1}_K$. 
%		}\vspace{-1.5 cm}
%	\end{center} 
%\end{figure*}

%Fig. \ref{state16_60actions} shows the QR-DRL training results of all $60$ actions under the state $\boldsymbol{e}= \boldsymbol{1}_K$, and demonstrates the process of selecting the optimal action using distributions. 
%Once the state is observed, the return of each action is calculated as the expectation of its distribution function, and, the optimal action is selected to maximize the return value.  
%Based on Fig. \ref{state16_60actions}, since $\Delta \Theta_{13108}$ yields the highest expectation of the downlink rate, $\Delta \Theta_{13108}$ will be chosen as the optimal action to adjust the reflection coefficient of the IR, once the state  $\boldsymbol{e}= \boldsymbol{1}_K$ is observed.   

\subsubsection{Performance evaluation of online deployment}

Fig. \ref{onine} shows the online performance of the proposed QR-DRL approach, with the no-learning baseline and  Q-learning scheme,  given imperfect downlink CSI.  We run each method for $100$ times and plot the average results. 
After an offline training stage, both the proposed QR-DRL and Q-learning models have converged. Therefore, the downlink sum-rate has a smaller variation. compared with the training phase in Fig. \ref{trainingPreocess}. 
Meanwhile, Fig. \ref{onine} shows that the proposed  QR-DRL approach improves the average data rate of the IR-aided downlink communication by around $10\%$, compared with the conventional RL method. %and the no-learning scheme.  
Therefore, we conclude that the distribution-based prediction of the downlink sum-rate for the future time slot is more accurate, compared with a scalar-based prediction of Q-learning, and the proposed QR-DRL method enables an efficient communication  of the IR-aided mmW transmission service. 
%yields a performance gain of around , in term of the average spectrum efficiency in . %, given imperfect CSI of the IR-aided downlink communication. % for the IR-aided transmission, under imperfect CSI. 
%in terms of average spectrum efficiency of the IR-aided downlink communication, compared with the traditional  method. 

 \begin{figure*}[!t]
 	\begin{center}\vspace{-1 cm}
 		\includegraphics[width=8.7cm]{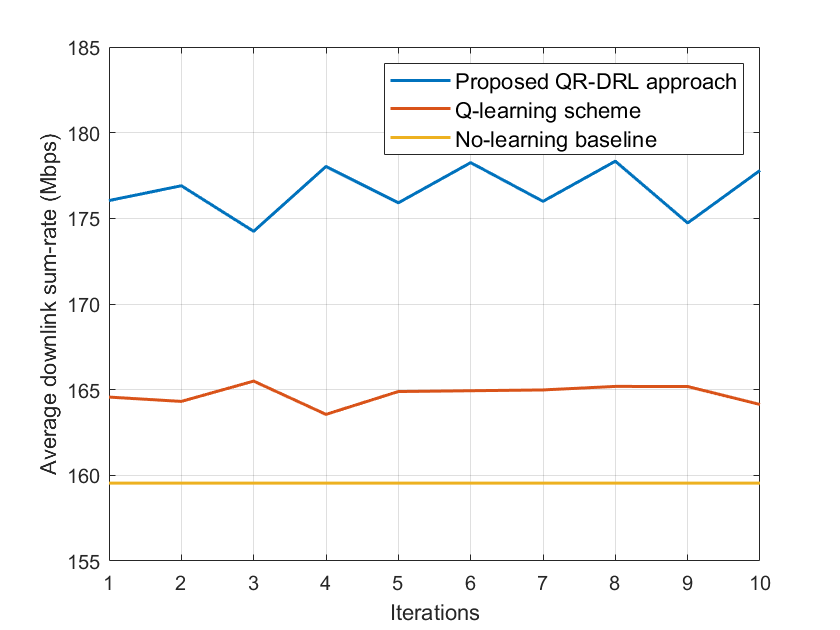}
 		\caption{\label{onine}  Transmission performance for the online deployment.
 		}
 	\end{center}
 	\vspace{-1.2 cm}
 \end{figure*}

\section{Conclusion}\label{sec_conclusion}
In this paper, we have proposed a novel framework to optimize the downlink multi-user communications of a mmW BS that is assisted by an IR. 
We have developed a practical approach to measure the real-time CSI.  
First, for a perfect CSI scenario, the precoding transmission of the BS and the reflection coefficient of the IR are jointly optimized, via an iterative approach, so as to maximize the sum of downlink rates towards multiple users.   
Next, given imperfect CSI, we have proposed a DRL approach to learn the optimal IR reflection, so as to maximize the expectation of downlink sum-rate. 
In order to model the rate's probability distribution, we have developed an iterative algorithm and proved the convergence of the proposed QR-DRL approach.  
Simulation results show that, given  error-free CSI, the proposed transmission approach outperforms two baselines: the direct transmission scheme and a fixed IR reflection.    
Furthermore, under limited knowledge of CSI, simulation results show that the proposed QR-DRL method improves the average data rate by around $10\%$ in online deployments, compared with a Q-learning baseline. 
In our future work, we will investigate scenarios related to the IR-aided communication for multi-antenna UEs with multipath propagation.

\begin{appendices}
	\section{Proof of Theorem \ref{theorem1}}   
	In order to prove that 	
	for any distribution $Z_1$, $Z_2 \in \mathcal{Z}$, $\bar{d}_1(\Pi_{W_1}\mathcal{T}^\pi Z_1, \Pi_{W_1}\mathcal{T}^\pi Z_2) \leq \gamma (\bar{d}_1(  Z_1,  Z_2) + C_Q )$,  %$\bar{d}_1(\Pi_{W_1} \mathcal{T}^\pi Z_1, \Pi_{W_1}\mathcal{T}^\pi Z_2) \leq C_Q \bar{d}_1(  Z_1,  Z_2)$, 
	we only need to prove that,   
	for  $U = \mathcal{T}^\pi Z_1$ and $V = \mathcal{T}^\pi Z_2$, %there exists $O \in (0, \frac{1}{\gamma}]$, such that
	\begin{equation}\label{appEqu2}
	\bar{d}_1(\Pi_{W_1} U, \Pi_{W_1} V) \leq \bar{d}_1(U, V) + C_Q, 
	\end{equation} 
	where $\lim_{Q\rightarrow \infty} C_Q =0$. Then, based on Lemma 3 in \cite{bellemare2017distributional}, we have
	\begin{equation}\label{appEqu1}
		\bar{d}_1(  \mathcal{T}^\pi Z_1, \mathcal{T}^\pi Z_2) \leq \gamma \bar{d}_1 (  Z_1,  Z_2).
	\end{equation}    
	Combining (\ref{appEqu2}) and (\ref{appEqu1}), we have 
	%\begin{equation*}
	$\bar{d}_1\left[ \Pi_{W_1} (\mathcal{T}^\pi Z_1), \Pi_{W_1} (\mathcal{T}^\pi Z_2)  \right] \leq 
	 \bar{d}_1(\mathcal{T}^\pi Z_1, \mathcal{T}^\pi Z_2) + C_Q \leq 
	 \gamma (\bar{d}_1(  Z_1,  Z_2)+C_Q)$,
	%\end{equation*} 
	%Therefore, we find  $C_Q= O \gamma \in (0, 1]$, 
	which completes the proof. Next, we will focus on the proof of (\ref{appEqu2}). % Here, only need to prove that given that the original distribution between U and V is finite, then the distance after mapping is also finite 

	For notional simplification, we denote the cumulative probability function of any distribution $Z \in \mathcal{Z}$ as $F_Z(\theta) = \omega$, and the  cumulative probability function of a Q-quantile estimation of $Z$ as $F_{Z_Q}(\theta_i) = \frac{i}{Q}$, $i=1,\cdots, Q$. 
	The $p$-Wasserstein distance between two distributions $U$ and $V$ is defined, based on (\ref{Wmetric}), by 
	\begin{equation}\label{Wuv}
		{d_p}^p (U,V) =  \int_{0}^{1} \left|F^{-1}_{U}(\omega) - F^{-1}_{V}(\omega)\right|^p \dif \omega.  %\inf_{U,V}\| U-V \|_p
	\end{equation} 	  
	Then, the maximal form of $p$-Wasserstein metric is %given by (\ref{maxWmetric}) as 
	%\begin{equation*} %\label{maxW}
	${\bar{d}_p}^{~p} ( U,  V) = \sup_{\boldsymbol{y},\boldsymbol{\phi}} {d_p}^{p}(  U(\boldsymbol{y},\boldsymbol{\phi}), V(\boldsymbol{y},\boldsymbol{\phi}))$. 
	%\end{equation*} 
	Without loss of generality, we assume that  ${\bar{d}_p}^{~p} ( U,  V) $ is finite.
	After the $1$-Wasserstein minimizing projection, the $p$-Wasserstein distance between Q-quantile estimations $\Pi_{W_1} U$ and $\Pi_{W_1} V$ is %calculated via  
	 \begin{equation}\label{WQuv}
	 	\begin{aligned} 
	 		{d_p}^{p} (\Pi_{W_1} U, \Pi_{W_1} V)  &=  \int_{0}^{1} \left|{F_{U_Q}}^{-1}(\omega) - {F_{V_Q}}^{-1}(\omega)\right|^p \dif \omega 
	 		 & = \frac{1}{Q} \sum_{i=1}^{Q} \left|{F_{U_Q}}^{-1}({\omega}_i) - {F_{V_Q}}^{-1}({\omega}_i) \right| ^p , 
	 	\end{aligned}
	 \end{equation}   
	 where ${\omega}_i = \frac{2i-1}{2Q}$. Based on (\ref{Wuv}) and (\ref{WQuv}), we have for $p=1$,
	 \begin{equation*}
	 {d_1} (\Pi_{W_1} U, \Pi_{W_1} V) -  {d_1} (U,V)   
	  = \int_{0}^{1} \left|{F_{U_Q}}^{-1}(\omega) - {F_{V_Q}}^{-1}(\omega)\right| - \left|F^{-1}_{U}(\omega) - F^{-1}_{V}(\omega)\right| \dif \omega
	  \end{equation*}
	 \begin{equation*}
	 	 \stackrel{a}{\leq} \int_{0}^{1} \left|{F_{U_Q}}^{-1}(\omega) - {F_{V_Q}}^{-1}(\omega)   - (F^{-1}_{U}(\omega) - F^{-1}_{V}(\omega))\right|  \dif \omega
	 \end{equation*}
	 \begin{equation*}
	 = \int_{0}^{1} \left|({F_{U_Q}}^{-1}(\omega)  -  F^{-1}_{U}(\omega)) + (F^{-1}_{V}(\omega) - {F_{V_Q}}^{-1}(\omega) )   \right|  \dif \omega
	 \end{equation*}
	 \begin{equation} \label{Winequ}
	 \stackrel{b}{\leq}  \int_{0}^{1} \left| {F_{U_Q}}^{-1}(\omega)  -  F^{-1}_{U}(\omega) \right|  \dif \omega  + \int_{0}^{1} \left|F^{-1}_{V}(\omega) - {F_{V_Q}}^{-1}(\omega)    \right|  \dif \omega = \mathcal{L}_U(U_Q) + \mathcal{L}_V(V_Q),
	 \end{equation} 
	 where $\mathcal{L}_Z(Z_Q)$ is the estimation error between  distribution $Z$ and the Q-quantile estimation $Z_Q$. 
	 In step (a), we have used the triangle inequality
	 of  absolute values: $|A| - |B| \le |A - B|$, where $A = {F_{U_Q}}^{-1}(\omega) - {F_{V_Q}}^{-1}(\omega)$ and $B = F^{-1}_{U}(\omega) - F^{-1}_{V}(\omega)$, and step (b) follows from $|C+D| \leq |C| + |D|$, where $C = {F_{U_Q}}^{-1}(\omega)  -  F^{-1}_{U}(\omega)$ and $D = F^{-1}_{V}(\omega) - {F_{V_Q}}^{-1}(\omega)$. 
	 
	 The Q-quantile estimation error has a upper limit $\sup_Z \mathcal{L}_Z(Z_Q)= \frac{\bar{\theta}}{2Q}$, where $\bar{\theta}$ is the maximal value of $\theta$. 
     Since $\bar{d}_1(U,V)< \infty$, $\bar{\theta}$ is a finite value. 
	 Therefore, based on (\ref{Winequ}), we have 
	 \begin{equation*}
	 \sup_{\boldsymbol{y},\boldsymbol{\phi}} {d_1} \left( (\Pi_{W_1} U(\boldsymbol{y},\boldsymbol{\phi}), \Pi_{W_1} V(\boldsymbol{y},\boldsymbol{\phi}))  \right) \leq \sup_{\boldsymbol{y},\boldsymbol{\phi}} \left(   {d_1}  (U(\boldsymbol{y},\boldsymbol{\phi}),V(\boldsymbol{y},\boldsymbol{\phi}))  +  \mathcal{L}_U(U_Q) +  \mathcal{L}_V(V_Q)  \right), 
	 \end{equation*} 
	 i.e., %Given the definition that ${\bar{d}_p}^{~p} (\Pi_{W_1} U, \Pi_{W_1} V) = \sup_{\boldsymbol{y},\boldsymbol{\phi}} {d_p}^{p}(\Pi_{W_1} U(\boldsymbol{y},\boldsymbol{\phi}), \Pi_{W_1} V(\boldsymbol{y},\boldsymbol{\phi}))$, (\ref{aa}) can be rewritten as 
	 \begin{equation}\label{WinequSup}
	 {\bar{d}_1} (\Pi_{W_1} U, \Pi_{W_1} V) \leq {\bar{d}_1} (U,V)   + \frac{\bar{\theta}}{Q}.
	 \end{equation} 
	 By combining (\ref{WinequSup}) and (\ref{appEqu1}), we have 
	 %\begin{equation*} 
	 $\bar{d}_1\left[ \Pi_{W_1} (\mathcal{T}^\pi Z_1), \Pi_{W_1} (\mathcal{T}^\pi Z_2)  \right] \leq 
	 \bar{d}_1(\mathcal{T}^\pi Z_1, \mathcal{T}^\pi Z_2) +  \frac{\bar{\theta}}{Q} \leq 
	 \gamma (\bar{d}_1(  Z_1,  Z_2) +  \frac{\bar{\theta}}{Q} )$.  
	 %\end{equation*} 
	 Therefore, $C_Q = \frac{\bar{\theta}}{Q}$, and  $\lim_{Q\rightarrow \infty} C_Q = \lim_{Q\rightarrow \infty}  \frac{\bar{\theta}}{Q} = 0$. 
	Consequently, let $Z_1 = Z$ and $Z_2 = Z_Q$, we have  
	%\begin{equation*} 
	$\bar{d}_1\left[ \Pi_{W_1} (\mathcal{T}^\pi Z), \Pi_{W_1} (\mathcal{T}^\pi Z_Q)  \right]  \leq 
	\gamma (\bar{d}_1(  Z,  Z_Q) + C_Q )$.   
	%\end{equation*}
	
	 Given that  both $\bar{d}_1(  Z,  Z_Q)$ and $C_Q$ have finite values and $\gamma \in (0,1)$, a repeated application of the QR-DRL projection  $\Pi_{W_1} \mathcal{T}^\pi$ will contract the distance between $Z$ and $Z_Q$ to zero. 
	 Therefore, the quantile approximation $Z_Q$ will eventually converge to a unique fixed point $Z$.

\end{appendices}

%\begin{thebibliography}{1}

%\bibitem{IEEEhowto:kopka}
%H.~Kopka and P.~W. Daly, \emph{A Guide to \LaTeX}, 3rd~ed.\hskip 1em plus
%  0.5em minus 0.4em\relax Harlow, England: Addison-Wesley, 1999.

%\end{thebibliography}
 
%\def\baselinestretch{0.7}
\bibliographystyle{IEEEtran}
\bibliography{references}

\end{document}